\newcommand{\ba}{\begin{array}}
\newcommand{\ea}{\end{array}}
\newcommand{\beq}{\begin{equation}}
\newcommand{\eeq}{\end{equation}}
\newcommand{\ben}{\begin{enumerate}}
\newcommand{\een}{\end{enumerate}}
\newcommand{\bit}{\begin{itemize}}
\newcommand{\eit}{\end{itemize}}
\newtheorem{lem}{Lemma}
\newtheorem{prop}{Proposition}
\newtheorem{theorem}{Theorem}
\theoremstyle{remark}
\newtheorem*{remark}{Remark}
\begin{document}
\title{General bright and dark soliton solutions to the massive Thirring model via KP hierarchy reductions}
\author{Junchao Chen \\
\textsl{Department of Mathematics, Lishui University, China}
\and Bao-Feng Feng \\
\textsl{School of Mathematical and Statistical Sciences} \\
\textsl{The University of Texas Rio Grande Valley}}
\maketitle

\begin{abstract}
In the present paper, we are concerned with the tau function and its connection with the Kadomtsev-Petviashvili (KP) theory for the massive Thirring (MT) model. First, we bilinearize the massive Thirring model under both the vanishing and nonvanishing boundary conditions. Starting from a set of bilinear equations of two-component KP-Toda hierarchy, we derive the multi-bright solution to the MT model by the KP hierarchy reductions. Then, we show that the discrete KP equation can generate a set of bilinear equations of a deformed KP-Toda hierarchy through Miwa transformation. By imposing constraints  on the parameters of the tau function, the general dark soliton solution to the MT model is  constructed from the tau function of the discrete KP equation. Finally, the dynamics and properties of one- and two-soliton for both the bright and dark cases are analyzed in details.
\end{abstract}
\section{Introduction}
The massive Thirring (MT) model
\begin{eqnarray}
&&  \displaystyle \mathrm{i} u_x + v +\sigma u|v|^2 =0\,,  \label{MTa} \\ [5pt]
&& \displaystyle \mathrm{i} v_t + u + \sigma v|u|^2  =0 \label{MTb} \,,
\end{eqnarray}
with $\sigma=\pm1$, was derived by Thirring in 1958  \cite{Thirring} in the context of
general relativity. It represents a relativistically invariant nonlinear Dirac equation in the space of
one dimension. It is one of the most remarkable solvable field theory models. Its complete integrability was firstly approved by
Mikhailov \cite{Mikhailov} and Orfanidis \cite{Orfanidis} independently.  The inverse scattering transform for the MT model was studied by
Kuznetsov and Mikhailov \cite{Kuznetsov} and many others \cite{Kawata79,WadatiMT83,KaupLakoba,Villarroel:1991,DmkitryMTIST}. The Darboux transformation, B\"acklund transformation of the MT model and its connection with other integrable systems have been investigated by Kaup and Newell \cite{KaupNewell}, Lee \cite{Lee:1993,Lee:1994}, Prikarpatskii \cite{Prikarpatskii:1979,Prikarpatskii:1981} , Franca {\it et al.} \cite{Franca13} and Degasperis \cite{DegasperisMT}.

Since the pioneer work by Date \cite{Date}, the soliton solutions to the MT model including the ones with nonvanishing background were constructed by many authors
\cite{Shnider84,Alonso:1984,BarashenkovGetmanov:1987,BarashenkovGetmanovKovtun:1993,BarashenkovGetmanov:1993,Talalov:1987,Vaklev:1996}. In addition,
the algebro-geometric solutions to the MT model has attracted attention and has been studied from late 1970s \cite{DateMT2,MTquasi2} to 1980s \cite{Bikbaev,Wisse85} until recently \cite{MTEnolskii,MTElbeck}.
It should be pointed out that the rogue wave solutions to the MT model were recently investigated in \cite{DWA-MT2015,HeMT2015,ShihuaMT} by the Darboux transformation.

The MT model admits the following Lax pair
\begin{equation}
\Phi_x= U\Phi, \ \quad
\Phi_t = V \Phi, \label{matrix-MT}
\end{equation}
where
\begin{equation*}
 U=-\frac{\mathrm{i}}{2} \left(
 \begin{matrix}
\lambda^2 - \sigma vv^* &  2\lambda v \\
 2\lambda \sigma v^* & -\lambda^2 + \sigma vv^*
 \end{matrix}
 \right)
\end{equation*}
\begin{equation*}
V=-\frac{\mathrm{i}}{2} \left(
 \begin{matrix}
\lambda^{-2} -\sigma uu^* &  2\lambda^{-1} u \\
 2\lambda^{-1}\sigma u^* & -\lambda^{-2} + \sigma uu^*
 \end{matrix}
 \right)\,,
\end{equation*}
in the sense that the compatibility condition $U_t-V_x + [U,V]=0$ gives the MT model (\ref{MTa})--(\ref{MTb}).
It is noted that the MT model can also be written in the laboratory coordinates  as follows
\begin{eqnarray}
&&  \displaystyle \mathrm{i} (u_T + u_X)+ v + \sigma u|v|^2 =0\,,  \label{MTLa} \\ [5pt]
&& \displaystyle \mathrm{i} (v_T - v_X) + u + \sigma v|u|^2  =0 \label{MTLb} \,,
\end{eqnarray}
via the following transformation
 \begin{equation*}
X=x+t\,, \quad T=x-t\,.
\end{equation*}
The MT model in the laboratory coordinates has application in nonlinear optics to describe pulse propagation in Bragg gratings \cite{Sipe}.  Alexeeva {\it et al.} studied the PT-symmetry extensions of the MT model \cite{BarashenkovGetmanov:2019}.
Regarding the integrable discretization, Nijhoff {\it et al.} gave the integrable discretization of the MT model in light cone coordinates \cite{NijhoffMT1,NijhoffMT2} in 1980s. Most recently, Pelinovsky {\it et al.} proposed a semi-discrete integrable MT model in laboratory coordinates \cite{DmkitryMTdiscrete} and studied its solution solution via Darboux transformation \cite{DmitryXu2019}.

Surprisingly, as far as we are aware, the bilinear formulation is missing in the literature. As far as we are aware, no paper published either in finding solutions via Hirota's bilinear method \cite{Hirotabook} or revealing the connection of the MT model to the KP theory by Kyoto school \cite{JM}.  By combining the Hirota's bilinear method and the KP hierarchy reduction method, we have constructed general soliton solutions to many soliton equations such as the vector nonlinear Schr\"odiner equation \cite{FengvNLS} and the complex short pulse equation \cite{FengShen_ComplexSPE,FMO_ComplexSPE} for both the vanishing boundary condition (VBC) and nonvanishing boundary condition (NVBC).
 Therefore, the motivation and the goal of the present study is to bilinearize the MT model and find its soliton solutions under VBC and NVBC. 
 
 The remainder of the present paper in organized as follow. In section 2, we first bilinearize the MT model into a set of four  equations corresponding to the bright soliton solution. Starting from a set of bilinear equations satisfied by the tau functions of two-component KP hierarchy, we arrive at above four bilinear equations satisfied by the bright soliton solutions by a series of reductions such as dimension and complex conjugate reductions.  On the other hand, in section 3, we show that the discrete KP equation can generate a single-component KP-Toda hierarchy with four flows: one in positive flow and three in negative flows, along with its tau-function and a set of bilinear equations. By imposing one constraint on parameters, three reduction relations are achieved simultaneously, by which we are finally able to construct general $N$-dark soliton solution to the MT model. The dynamical behaviors and properties of one- and two-soliton solutions for both the bright and dark ones are analyed in section 4. The paper is concluded in section 5 by some comments and further topics.

\section{Bright solitons in the MT model}
\subsection{Bilinearization of the MT model under VBC}
The bilinearization of the MT system is established
by the following proposition.
\begin{prop}
By means of the dependent variable transformations
\begin{equation} \label{bt_tran1}
u=\frac{g}{f^{\ast}}\,, \quad
v=\frac{h}{f}\,,
\end{equation}
the MT model (\ref{MTa})--(\ref{MTb}) is transformed into the following bilinear equations
\begin{eqnarray}
&& \displaystyle \mathrm{i} D_{x}g \cdot f +  h f ^{\ast}  =0   \,,  \label{MTbtBL1} \\ [5pt]
&& \displaystyle \mathrm{i}D_{x}  f  \cdot f^{\ast}  =-\sigma hh^{\ast} \,, \label{MTbtBL2}  \\ [5pt]
&& \displaystyle \mathrm{i} D_{t} h \cdot f^{\ast}   +  g f=0 \,,   \label{MTbtBL3} \\ [5pt]
&&\displaystyle  \mathrm{i} D_{t} f \cdot f^{\ast } = \sigma gg^{\ast}  \,. \label{MTbtBL4}
\end{eqnarray}
where $D$ is the Hirota $D$-operator defined by \cite{Hirotabook}
\begin{equation*}
D_s^n D_y^m f\cdot g=\left(\frac{\partial}{\partial s} -\frac{\partial}{%
\partial s^{\prime }}\right)^n \left(\frac{\partial}{\partial y} -\frac{%
\partial}{\partial y^{\prime }}\right)^m f(y,s)g(y^{\prime },s^{\prime
})|_{y=y^{\prime }, s=s^{\prime }}\,.
\end{equation*}
\end{prop}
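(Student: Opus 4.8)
The plan is to substitute the dependent-variable transformations \eqref{bt_tran1} directly into \eqref{MTa}--\eqref{MTb}, clear the denominators, and then read off the Hirota bilinear structure, using the two ``norm'' equations \eqref{MTbtBL2} and \eqref{MTbtBL4} to absorb the cubic nonlinearities $\sigma u|v|^2$ and $\sigma v|u|^2$.

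First I would treat \eqref{MTa}. Inserting $u=g/f^{\ast}$ and $v=h/f$ gives $u_x=(g_x f^{\ast}-g f^{\ast}_x)/(f^{\ast})^2$ and $\sigma u|v|^2=\sigma g\,hh^{\ast}/\big(f(f^{\ast})^2\big)$, so multiplying \eqref{MTa} through by the common denominator $f(f^{\ast})^2$ produces the single scalar relation
\[
\mathrm{i} f\big(g_x f^{\ast}-g f^{\ast}_x\big)+h(f^{\ast})^2+\sigma g\,hh^{\ast}=0 .
\]
The decisive step is to eliminate the cubic term by invoking \eqref{MTbtBL2} in the form $\sigma hh^{\ast}=-\mathrm{i}\big(f_x f^{\ast}-f f^{\ast}_x\big)$. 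Substituting this, the two contributions proportional to $f f^{\ast}_x$ cancel identically, and the remainder factors as $f^{\ast}\big[\mathrm{i}(g_x f-g f_x)+h f^{\ast}\big]$; dividing by $f^{\ast}$ and recalling $D_x g\cdot f=g_x f-g f_x$ gives exactly \eqref{MTbtBL1}.

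An entirely parallel computation handles \eqref{MTb}. Writing $v_t=(h_t f-h f_t)/f^2$ and $\sigma v|u|^2=\sigma h\,gg^{\ast}/\big(f^2 f^{\ast}\big)$, multiplying by $f^2 f^{\ast}$ yields $\mathrm{i} f^{\ast}(h_t f-h f_t)+g f^2+\sigma h\,gg^{\ast}=0$; then using \eqref{MTbtBL4} as $\sigma gg^{\ast}=\mathrm{i}\big(f_t f^{\ast}-f f^{\ast}_t\big)$ cancels the $f_t$ terms and leaves $f\big[\mathrm{i}(h_t f^{\ast}-h f^{\ast}_t)+g f\big]$, which after division by $f$ is \eqref{MTbtBL3}.

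Conceptually the only real idea is that admitting a \emph{complex} tau function $f$---so that $u$ and $v$ carry the distinct denominators $f^{\ast}$ and $f$---supplies precisely the extra freedom needed to split each cubic scalar PDE into a pair of bilinear equations, with \eqref{MTbtBL2} and \eqref{MTbtBL4} playing the role of linearizing the $|v|^2$ and $|u|^2$ terms. I do not expect any genuine obstacle here; the only care required is correct bookkeeping of the Hirota operators and the sign conventions, together with, if one wants the converse implication, checking that the four-equation system is consistent rather than over-determined---which holds because \eqref{MTbtBL2} and \eqref{MTbtBL4} may be read as defining relations that fix the modulus and phase of $f$.
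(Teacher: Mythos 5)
Your proof is correct and takes essentially the same route as the paper: both substitute the transformation into the MT equations and split the resulting expression into two bilinear relations, with \eqref{MTbtBL2} and \eqref{MTbtBL4} absorbing the cubic terms $\sigma u|v|^2$ and $\sigma v|u|^2$, leaving \eqref{MTbtBL1} and \eqref{MTbtBL3} as the remainder. The only cosmetic difference is that the paper organizes the splitting via the product-rule factorizations $u=\frac{g}{f}\cdot\frac{f}{f^{\ast}}$ and $v=\frac{h}{f^{\ast}}\cdot\frac{f^{\ast}}{f}$ instead of clearing denominators first, but the underlying algebra and cancellations are identical.
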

\begin{proof}
By rewriting the dependent variable transformations (\ref{bt_tran1})
\begin{equation*}
u=\frac{g}{f}   \frac{f}{f^{\ast} }  \,, \quad
v=\frac{{h}}{f^{\ast}} \frac{f^{\ast}}{{f}}\,,
\end{equation*}
and substituting into Eq.(\ref{MTa}), we have
 \begin{eqnarray}
&&\left[
  \mathrm{i}  \left(\frac{g}{f}\right)_x \frac{f}{f^{\ast} }  +  \frac{h}{f}
\right]
+\left[
  \mathrm{i} \frac{g}{f} \left( \frac{f}{f^{\ast} }\right)_x
 +   \sigma \frac{hh^*}{ff^*}  \frac{g}{f^{\ast} } \right]   =0 \,.
\end{eqnarray}
Bilinear equations (\ref{MTbtBL1}) and  (\ref{MTbtBL2}) are deduced by taking zero for each group inside bracket. Similarly, we can drive bilinear equations
(\ref{MTbtBL3}) and  (\ref{MTbtBL4}) by substituting (\ref{bt_tran1})  into Eq.(\ref{MTb}).
\end{proof}

\subsection{Reductions to the multi-bright solution from the two-component KP-Toda hierarchy}
Define the following tau functions of two-component KP-Toda
hierarchy,
\begin{equation*}
f_{nm}=\left\vert
\begin{array}{cc}
A_n & I \\
-I & B_m\end{array}\right\vert \,,
\end{equation*}
\begin{equation*}
g_{nm}=
\left\vert
\begin{array}{ccc}
A_n & I & \Phi_n ^{T} \\
-I & B_m & \mathbf{0}^{T} \\
\mathbf{0} & -\bar{\Psi}_m & 0\end{array}\right\vert \,,\quad
\bar{g}_{nm}=\left\vert
\begin{array}{ccc}
A'_n & I & \mathbf{0}^{T} \\
-I & B_m & {\Psi }_m^{T} \\
-\bar{\Phi}_n & \mathbf{0} & 0\end{array}\right\vert \,,
\end{equation*}
where $A_n$ and $B_m$ are $N\times N$ matrices whose elements are
\begin{equation*}
a_{ij}(n)=\frac{\mu \bar{p}_{j} }{p_{i}+\bar{p}_{j}}\left( -\frac{p_{i}}{\bar{p}_{j}}\right)
^{n}e^{\xi _{i}+\bar{\xi}_{j}},
\quad
a'_{ij}(n)=-\frac{\mu p_{i} }{p_{i}+\bar{p}_{j}}\left( -\frac{p_{i}}{\bar{p}_{j}}\right)
^{n}e^{\xi _{i}+\bar{\xi}_{j}},
\end{equation*}
\begin{equation*}
b_{ij}(m)=\frac{\nu}{q_{i}+\bar{q}_{j}}\left( -\frac{q_{i}}{\bar{q}_{j}}\right) ^{m}e^{\eta _{i}+\bar{\eta}_{j}}\,,
\end{equation*}
with
\begin{equation*}
\xi _{i}=\frac{1}{p_{i}}x_{-1}+p_{i}x_{1}+\xi _{i0},\quad \bar{\xi}_{j}=\frac{1}{\bar{p}_{j}}x_{-1}+\bar{p}_{j}x_{1}+\bar{\xi}_{j0},
\end{equation*}
\begin{equation*}
\eta _{i}=\frac{1}{q_{i}}y_{-1}+q_{i}y_{1}+\eta _{i0},\quad \bar{\eta}_{j}=\frac{1}{\bar{q}_{j}}y_{-1}+\bar{q}_{j}y_{1}+\bar{\eta}_{j0},
\end{equation*}
where $\Phi_n $, $\Psi_n$, $\bar{\Phi}_n$ and $\bar{\Psi}_n$ are $N$-component row vectors
\begin{equation*}
\Phi_n =\left(p_1^ne^{\xi _{1}},\cdots ,p_N^ne^{\xi _{N}}\right) \,,\ \bar{\Phi}_n=\left((-\bar{p}_{1})^{-n}e^{\bar{\xi}_{1}},\cdots ,(-\bar{p}_{N})^{-n}e^{\bar{\xi}_{N}}\right) \,,\
\end{equation*}\begin{equation*}
\Psi_m =\left(q_1^me^{\eta _{1}},\cdots ,q_N^me^{\eta _{N}}\right) \,, \bar{\Psi}_m=\left((-\bar{q}_{1})^{-m}e^{\bar{\eta}_{1}},\cdots, (-\bar{q}_{N})^{-m}e^{\bar{\eta}_{N}}\right) \,.
\end{equation*}
Then we have the following lemma:
\begin{lem}
The above tau functions of two-component KP-Toda hierarchy satisfy the following bilinear equations
\begin{eqnarray}
\label{before-kp-1} &&D_{x_{-1}}g_{nm}\cdot f_{nm}=  g_{n-1,m}f_{n+1,m},
\\
\label{before-kp-2} &&D_{x_1}g_{n,m+1}\cdot f_{n+1,m}=  g_{n+1,m+1}f_{nm},
\\
\label{before-kp-3} &&D_{y_1}f_{n+1,m}\cdot f_{n,m}=  \mu \nu g_{n,m}\bar{g}_{nm},
\\
\label{before-kp-4} &&D_{y_{-1}}f_{n+1,m}\cdot f_{n,m}=  -\mu \nu g_{n,m+1}\bar{g}_{n,m-1}.
\end{eqnarray}
\end{lem}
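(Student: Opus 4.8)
The plan is to exploit the fact that $f_{nm}$, $g_{nm}$ and $\bar g_{nm}$ are Grammian (Cauchy-type) determinants, and to reduce each of (\ref{before-kp-1})--(\ref{before-kp-4}) to Jacobi's determinant identity. Writing $\phi_i(n)=p_i^{\,n}e^{\xi_i}$, $\bar\phi_j(n)=(-\bar p_j)^{-n}e^{\bar\xi_j}$, $\psi_i(m)=q_i^{\,m}e^{\eta_i}$, $\bar\psi_j(m)=(-\bar q_j)^{-m}e^{\bar\eta_j}$, the matrix entries factor as $a_{ij}(n)=\frac{\mu\bar p_j}{p_i+\bar p_j}\phi_i(n)\bar\phi_j(n)$ and $b_{ij}(m)=\frac{\nu}{q_i+\bar q_j}\psi_i(m)\bar\psi_j(m)$, so that $f_{nm}$ is a Cauchy--Gram determinant and $g_{nm},\bar g_{nm}$ are bordered companions of it, the border columns carrying $\Phi_n,\Psi_m$ and the border rows carrying $\bar\Psi_m,\bar\Phi_n$. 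The only analytic input I need is the set of flow--shift relations
\[
\partial_{x_1}\phi_i(n)=\phi_i(n{+}1),\quad \partial_{x_{-1}}\phi_i(n)=\phi_i(n{-}1),\quad \partial_{x_1}\bar\phi_j(n)=-\bar\phi_j(n{-}1),\quad \partial_{x_{-1}}\bar\phi_j(n)=-\bar\phi_j(n{+}1),
\]
together with the entirely analogous ones for $\psi,\bar\psi$ in $y_{\pm1}$; after these, everything is linear algebra.

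The first step is to derive differential formulas for $f$. Since $\partial_{x_1}\!\left(\phi_i(n)\bar\phi_j(n)\right)=(p_i+\bar p_j)\phi_i(n)\bar\phi_j(n)$, the Cauchy denominator cancels and $\partial_{x_1}a_{ij}(n)=\mu\bar p_j\,\phi_i(n)\bar\phi_j(n)=-\mu\,\phi_i(n)\bar\phi_j(n{-}1)$ is a single outer product; hence the $x_1$-derivative of the $A_n$-block is rank one and the cofactor expansion collapses to one bordered determinant,
\[
\partial_{x_1}f_{nm}=\det\begin{pmatrix}A_n & I & \mu\,\Phi_n^{T}\\ -I & B_m & \mathbf 0^{T}\\ \bar\Phi_{n-1} & \mathbf 0 & 0\end{pmatrix},
\]
and likewise for $\partial_{x_{-1}}f$ and for $\partial_{y_{\pm1}}f$ (whose borders land in the $B$-block). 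Two further structural facts let me put \emph{all} participating objects on a common footing: first, the off-diagonal $\pm I$ blocks transfer a border between the two diagonal blocks, since adding $\sum_i v_i$ times the $i$-th middle row to a bottom border row $(v,\mathbf 0,0)$ replaces it by $(\mathbf 0,vB_m,0)$ without changing the determinant; second, $a'_{ij}(n)=a_{ij}(n)-\mu\,\phi_i(n)\bar\phi_j(n)$, i.e.\ $A'_n=A_n-\mu\,\Phi_n^{T}\bar\Phi_n$ is a rank-one shift, so $\bar g_{nm}$ too can be rewritten as a bordering over the \emph{same} matrix $A_n$. Combining this with the shift relations, which merely rescale the potential vectors (e.g.\ $f_{n+1,m}$ uses $\phi_i(n{+}1)=p_i\phi_i(n)$), I can exhibit $f_{n\pm1,m}$, $g$, $\bar g$ and the flow-derivatives appearing in (\ref{before-kp-1})--(\ref{before-kp-4}) as minors of one enlarged determinant whose extra columns carry $\Phi$ or $\Psi$ and whose extra rows carry $\bar\Psi$ or $\bar\Phi$.

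With every quantity realised over a common matrix $M$, each of the four identities becomes an instance of Jacobi's identity
\[
\det(M)\,\det(M^{i,j}_{k,l})=\det(M^{i}_{k})\,\det(M^{j}_{l})-\det(M^{i}_{l})\,\det(M^{j}_{k}),
\]
applied to the doubly bordered determinant, after the Hirota derivative on the left has been converted by the differential formulas above. I expect the main obstacle to be precisely this bookkeeping rather than any deep idea: because of the two-component block structure and the asymmetric action of the flows on barred versus unbarred potentials, one must track the signs, the simultaneous shifts in $n$ and $m$, and the transfer of borders between the two diagonal blocks so that the three-term Plücker combination collapses to the stated right-hand sides, including the factor $\mu\nu$ and the opposite sign in (\ref{before-kp-4}). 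Verifying that the unwanted cross terms cancel, separately for the $x_{\pm1}$-flows in (\ref{before-kp-1})--(\ref{before-kp-2}) and the $y_{\pm1}$-flows in (\ref{before-kp-3})--(\ref{before-kp-4}), is where the real work lies.
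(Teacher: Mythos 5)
Your proposal is correct and follows essentially the same route as the paper's own proof: introduce the potentials $\phi_i(n),\bar\phi_j(n),\psi_i(m),\bar\psi_j(m)$ with their flow--shift rules, realize the derivatives and index shifts of $f_{nm}$, $g_{nm}$, $\bar g_{nm}$ (using the rank-one relation $A'_n=A_n-\mu\,\Phi_n^{T}\bar\Phi_n$) as singly and doubly bordered determinants over a common matrix, and then collapse each bilinear equation to Jacobi's determinant identity. The remaining bookkeeping you flag is exactly what the paper carries out explicitly, with no additional ideas needed.
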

\begin{proof}
Let us take the following notations:
\begin{eqnarray*}
&&
\phi_i(n)=p^n_ie^{\xi_i},\ \
\bar{\phi}_i(n)=(-\bar{p}_{i})^{-n}e^{\bar{\xi}_{i}},\ \
\\
&&
\psi_i(m)=q^m_ie^{\eta_i},\ \
\bar{\psi}_i(m)=(-\bar{q}_{i})^{-m}e^{\bar{\eta}_{i}},\ \
\end{eqnarray*}
then the above matrix elements possess the differential and difference rules:
\begin{eqnarray*}
&&
\partial_{x_1} a_{ij}(n)=-\mu \phi_i(n)\bar{\phi}_j(n-1),\ \
\partial_{x_{-1}} a_{ij}(n)=\mu \phi_i(n-1)\bar{\phi}_j(n),
\\
&&
a_{ij}(n+1)=a_{ij}(n)-\mu\phi_i(n)\bar{\phi}_j(n),
\\
&&
\partial_{y_1} b_{ij}(m)=\nu \psi_i(m)\bar{\psi}_j(m),\ \
\partial_{y_{-1}} b_{ij}(m)=-\nu \psi_i(m-1)\bar{\psi}_j(m+1),
\\
&&
b_{ij}(m+1)=b_{ij}(m)+\nu\psi_i(m)\bar{\psi}_j(m+1),
\\
&&
\partial_{x_s} \phi_i(n)=\phi_i(n+s), \ \
\partial_{x_s} \bar{\phi}_i(n)=-\bar{\phi}_i(n-s),\ \
\\
&&
\partial_{y_s} \psi_i(m)=\psi_i(m+s), \ \
\partial_{y_s} \bar{\psi}_i(m)=-\bar{\psi}_i(m-s),\ \ (s=\pm 1).
\end{eqnarray*}
By using the differential formula of determinant, it can be
checked that the derivatives and shifts of the tau functions are expressed by the bordered determinants as follows:
\begin{eqnarray*}
&&
f_{n+1,m}=\left\vert
\begin{array}{ccc}
A_n & I & \Phi_n ^{T} \\
-I & B_m & \mathbf{0}^{T} \\
\mu\bar{\Phi}_n & \mathbf{0} & 1
\end{array}\right\vert \,, \ \
\partial_{x_{-1}}f_{n,m}=\left\vert
\begin{array}{ccc}
A_n & I & \Phi_{n-1} ^{T} \\
-I & B_m & \mathbf{0}^{T} \\
-\mu\bar{\Phi}_n & \mathbf{0} & 0
\end{array}\right\vert \,,\ \
\\
&&
\partial_{x_{1}}f_{n+1,m}
=\left\vert
\begin{array}{ccc}
A_n & I & \Phi_{n+1} ^{T} \\
-I & B_m & \mathbf{0}^{T} \\
\mu\bar{\Phi}_n & \mathbf{0} & 0
\end{array}\right\vert \,,\ \
\partial_{y_{1}}f_{n,m}
=\left\vert
\begin{array}{ccc}
A_n & I & \mathbf{0}^{T} \\
-I & B_m & \Psi_{m}^{T} \\
\mathbf{0} & -\nu\bar{\Psi}_m  & 0
\end{array}\right\vert \,,
\\
&&
\partial_{y_{-1}}f_{n,m}
=\left\vert
\begin{array}{ccc}
A_n & I & \mathbf{0}^{T} \\
-I & B_m & \Psi_{m-1}^{T} \\
\mathbf{0} & \nu\bar{\Psi}_{m+1}  & 0
\end{array}\right\vert \,,
\ \
\partial_{y_{1}}f_{n+1,m}
=\left\vert
\begin{array}{cccc}
A_n & I & \Phi_{n}^{T} &\mathbf{0}^{T} \\
-I & B_m & \mathbf{0}^{T} & \Psi_{m}^{T} \\
\mu \bar{\Phi}_{n} & \mathbf{0}^{T} & 1 & 0\\
\mathbf{0} & -\nu\bar{\Psi}_m  & 0 & 0
\end{array}\right\vert \,,
\\
&&\partial_{y_{-1}}f_{n+1,m}
=\left\vert
\begin{array}{cccc}
A_n & I & \Phi_{n}^{T} &\mathbf{0}^{T} \\
-I & B_m & \mathbf{0}^{T} & \Psi_{m-1}^{T} \\
\mu \bar{\Phi}_{n} & \mathbf{0}^{T} & 1 & 0\\
\mathbf{0} & \nu\bar{\Psi}_{m+1}  & 0 & 0
\end{array}\right\vert \,,
\ \
g_{n-1,m}
=\left\vert
\begin{array}{ccc}
A_n & I & \Phi_{n-1}^{T}  \\
-I & B_m & \mathbf{0}^{T} \\
\mathbf{0} & -\bar{\Psi}_{m}  & 0
\end{array}\right\vert \,,
\\
&&g_{n,m+1}
=\left\vert
\begin{array}{ccc}
A_n & I & \Phi_{n}^{T}  \\
-I & B_m & \mathbf{0}^{T} \\
\mathbf{0} & -\bar{\Psi}_{m+1}  & 0
\end{array}\right\vert \,,
\ \
g_{n+1,m+1}
=\left\vert
\begin{array}{cccc}
A_n & I & \Phi_{n+1}^{T} & \Phi_{n}^{T}  \\
-I & B_m & \mathbf{0}^{T} & \mathbf{0}^{T} \\
\mathbf{0} & -\bar{\Psi}_{m+1}  & 0 & 0\\
\bar{\Phi}_{n} &\mathbf{0} & 0 & 1
\end{array}\right\vert \,,
\end{eqnarray*}
\begin{eqnarray*}
&&
\partial_{x_{-1}} g_{n,m}
=-\left\vert
\begin{array}{cccc}
A_n & I & \Phi_{n}^{T} & \Phi_{n-1}^{T}  \\
-I & B_m & \mathbf{0}^{T} & \mathbf{0}^{T} \\
\mathbf{0} & -\bar{\Psi}_{m}  & 0 & 0\\
\mu\bar{\Phi}_{n} &\mathbf{0} & 1 & 0
\end{array}\right\vert \,,
\ \
\partial_{x_1}g_{n,m+1}
=\left\vert
\begin{array}{ccc}
A_n & I & \Phi_{n+1}^{T}  \\
-I & B_m & \mathbf{0}^{T} \\
\mathbf{0} & -\bar{\Psi}_{m+1}  & 0
\end{array}\right\vert \,, 
\\
&&
\bar{g}_{n,m}
=\left\vert
\begin{array}{ccc}
A_{n+1} & I & \mathbf{0}^{T}   \\
-I & B_m & \Psi_{m}^{T} \\
-\bar{\Phi}_{n} & \mathbf{0} & 0
\end{array}\right\vert 
=\left\vert
\begin{array}{ccc}
A_{n} & I & \mathbf{0}^{T}   \\
-I & B_m & \Psi_{m}^{T} \\
-\bar{\Phi}_{n} & \mathbf{0} & 0
\end{array}\right\vert,
\\
&&
\bar{g}_{n,m-1}
=\left\vert
\begin{array}{ccc}
A_{n+1} & I & \mathbf{0}^{T}   \\
-I & B_{m-1} & \Psi_{m-1}^{T} \\
-\bar{\Phi}_{n} & \mathbf{0} & 0
\end{array}\right\vert
=\left\vert
\begin{array}{ccc}
A_{n} & I & \mathbf{0}^{T}   \\
-I & B_m & \Psi_{m-1}^{T} \\
-\bar{\Phi}_{n} & \mathbf{0} & 0
\end{array}\right\vert.
\end{eqnarray*}
By using the Jacobi identity of determinant, we can obtain the following relations
\begin{eqnarray*}
&&f_{nm} \partial_{x_{-1}} g_{nm} = g_{nm} \partial_{x_{-1}}f_{nm}+  g_{n-1,m}f_{n+1,m},
\\
&&f_{nm} g_{n+1,m+1}=f_{n+1,m} \partial_{x_1}g_{n,m+1}- g_{n,m+1}\partial_{x_1}f_{n+1,m},
\\
&&f_{n,m} \partial_{y_1}f_{n+1,m} = f_{n+1,m} \partial_{y_1}f_{n,m}+ \mu \nu g_{n,m}\bar{g}_{nm},
\\
&&f_{n,m} \partial_{y_{-1}}f_{n+1,m} =f_{n+1,m} \partial_{y_{-1}}f_{n,m}  -\mu \nu g_{n,m+1}\bar{g}_{n,m-1},
\end{eqnarray*}
which completes the proof of bilinear equations (\ref{before-kp-1})-(\ref{before-kp-4}).
\end{proof}
In what follows, we proceed to the reduction procedure. To this end,  we impose the reduction condition
\begin{equation}\label{reduction-conditon}
q_i=\bar{p}_i,\ \ \bar{q}_i=p_i,
\end{equation}
then one can easily show
\begin{equation}
f_{n+1,m+1}=f_{n,m}\,, \quad g_{n+1,m+1}=-g_{n,m}\,,
\end{equation}
which leads to
\begin{eqnarray}
\label{before-kp-11} && D_{x_{-1}}g_{nm}\cdot f_{nm}=  -g_{n,m+1}f_{n+1,m},
\\
\label{before-kp-21} && D_{x_1}g_{n,m+1}\cdot f_{n+1,m}=  -g_{n,m}f_{nm}
\end{eqnarray}
from Eqs. (\ref{before-kp-1})--Eqs. (\ref{before-kp-2}).
On the other hand, by performing row and column operations, we rewrite the tau function $f_{nm}$ as
\begin{equation*}
f_{nm}=\left\vert
\begin{array}{cc}
\tilde{A}_n & I \\
-I & \tilde{B}_m\end{array}\right\vert \,,
\end{equation*}
with the elements
\begin{equation*}
\tilde{a}_{ij}(n)=\frac{\mu \bar{p}_{j} }{p_{i}+\bar{p}_{j}}\left( -\frac{p_{i}}{\bar{p}_{j}}\right)
^{n},
\quad
\tilde{b}_{ij}(m)=\frac{\nu}{q_{i}+\bar{q}_{j}}\left( -\frac{q_{i}}{\bar{q}_{j}}\right) ^{m}e^{\eta _{i}+\bar{\eta}_{j}+\xi _{j}+\bar{\xi}_{i}}\,,
\end{equation*}
From the condition (\ref{reduction-conditon}), one can deduce
\begin{equation}
\partial_{x_1}\tilde{b}_{ij}(m)=\partial_{y_1}\tilde{b}_{ij}(m),\ \ \partial_{x_{-1}}\tilde{b}_{ij}(m)=\partial_{y_{-1}}\tilde{b}_{ij}(m),
\end{equation}
which implies
\begin{equation}
\partial_{x_1}f_{nm}=\partial_{y_1}f_{nm}\,, \quad \partial_{x_{-1}}f_{nm}=\partial_{y_{-1}}f_{nm}\,.
\end{equation}
Thus Eqs. (\ref{before-kp-3})--(\ref{before-kp-4}) become
\begin{eqnarray}
\label{before-kp-31} &&D_{x_1}f_{n+1,m}\cdot f_{n,m}=  \mu \nu g_{n,m}\bar{g}_{nm},
\\
\label{before-kp-41} &&D_{x_{-1}}f_{n+1,m}\cdot f_{n,m}=  -\mu \nu g_{n,m+1}\bar{g}_{n,m-1}.
\end{eqnarray}

Next, we consider complex conjugate condition by giving the following lemma:
\begin{lem}
Assume $\mu=-{\rm i}$, $\nu=\sigma$, $n=m=0$,
\begin{equation}
\bar{p}_i=p^*_i,\ \ \bar{\xi}_{i0}=\xi^*_{i0},\ \ e^{\eta_{i0}}=\alpha_i,\ \ e^{\bar{\eta}_{i0}}=\alpha^*_i,
\end{equation}
and drop the dummy variables $y_1$ and $y_{-1}$,  then
 \begin{equation}
a'_{ij}(0)=a^*_{ji}(0),\ \ b^*_{ij}(0)=b_{ji}(0),
 \end{equation}
which implies
\begin{equation}
f_{00}=f^*_{10},\ \ \bar{g}_{00}=-g^*_{00}, \ \ \bar{g}_{01}=g^*_{0,-1}.
\end{equation}
\end{lem}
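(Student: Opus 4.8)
The plan is to prove the lemma in two stages: first verify the two scalar conjugation identities entrywise, and then lift them into the determinant formulas for $f$, $g$ and $\bar g$, reducing each claimed identity to elementary determinant manipulations. For the scalar relations I would substitute the constraints directly. Since the flow variables $x_1,x_{-1}$ are real, the assumptions $\bar p_i=p_i^\ast$ and $\bar\xi_{i0}=\xi_{i0}^\ast$ give $\xi_i^\ast=\bar\xi_i$; with $\mu=-\mathrm{i}$ (so that $\mu^\ast=-\mu$), conjugating $a_{ji}(0)=\mu\bar p_i\,e^{\xi_j+\bar\xi_i}/(p_j+\bar p_i)$ and comparing with $a'_{ij}(0)=-\mu p_i\,e^{\xi_i+\bar\xi_j}/(p_i+\bar p_j)$ gives $a'_{ij}(0)=a_{ji}^\ast(0)$ in one line. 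Likewise, using $\nu=\sigma\in\mathbb{R}$, the reduction $q_i=\bar p_i=p_i^\ast$, $\bar q_i=p_i$, and $e^{\eta_{i0}}=\alpha_i$, $e^{\bar\eta_{i0}}=\alpha_i^\ast$ with $y_1,y_{-1}$ dropped, the expression $b_{ij}(0)=\sigma\alpha_i\alpha_j^\ast/(p_i^\ast+p_j)$ immediately yields $b_{ij}^\ast(0)=b_{ji}(0)$.

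Next I would read these entrywise statements as matrix identities: $a'_{ij}(0)=a^\ast_{ji}(0)$ says $A'_0=(A_0^\ast)^{T}$, while $b^\ast_{ij}(0)=b_{ji}(0)$ says that $B_0$ is Hermitian, i.e. $B_0^\ast=B_0^{T}$. From the same substitutions I would record the companion vector relations $\Phi_0^\ast=\bar\Phi_0$, $\bar\Psi_0^\ast=\Psi_0$ and $\Psi_1^\ast=-\bar\Psi_{-1}$, the elementary shift identity $a_{ij}(1)=a'_{ij}(0)$ (equivalently $A_1=A'_0$), and $B_{-1}^\ast=B_1^{T}$; these are precisely the ingredients needed for the negative-flow cases.

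The three determinant identities then follow by conjugation, transposition and block sign flips. For $f_{00}=f_{10}^\ast$ I would first use $A_1=A'_0$ to write $f_{10}=\left|\begin{smallmatrix}A'_0 & I\\ -I & B_0\end{smallmatrix}\right|$, conjugate it via $(A'_0)^\ast=A_0^{T}$ and $B_0^\ast=B_0^{T}$, transpose the full $2N\times 2N$ determinant, and finally negate the last $N$ rows and the last $N$ columns; the net sign $(-1)^{2N}=1$ reproduces $f_{00}$. For $\bar g_{00}=-g_{00}^\ast$ and $\bar g_{01}=g_{0,-1}^\ast$ I would conjugate $g_{00}$, resp. $g_{0,-1}$, insert the matrix and vector relations above, transpose the bordered determinant, and flip the signs of the appropriate block rows and columns so as to match the definitions of $\bar g_{00}$, resp. $\bar g_{01}$.

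I expect the only genuinely delicate point to be the sign bookkeeping for the $(2N+1)$-dimensional bordered determinants: the single bordering row and column supply the extra factor that is precisely what produces the minus sign in $\bar g_{00}=-g_{00}^\ast$ but not in $\bar g_{01}=g_{0,-1}^\ast$. I would settle this by comparing the transposed, conjugated matrix against the target determinant one block at a time and tallying each elementary row/column negation.
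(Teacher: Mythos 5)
Your proposal is correct and follows essentially the same route as the paper's proof: verify the entrywise conjugation symmetries $a'_{ij}(0)=a^{*}_{ji}(0)$, $b^{*}_{ij}(0)=b_{ji}(0)$, then obtain the three tau-function identities by conjugating, transposing, and negating blocks of rows and columns of the (bordered) Gram determinants, with the extra border row/column accounting for the sign difference between $\bar g_{00}=-g^{*}_{00}$ and $\bar g_{01}=g^{*}_{0,-1}$. The only distinction is presentational: the paper carries out the same transpose-and-flip manipulations entrywise on the explicit determinants (and in fact verifies the conjugate version $\bar g_{0,-1}=g^{*}_{0,1}$ of the third identity), whereas you package the ingredients as block-matrix and vector relations such as $A_1=A'_0$, $B_0^{*}=B_0^{T}$, $\Phi_0^{*}=\bar\Phi_0$, $\Psi_1^{*}=-\bar\Psi_{-1}$ before doing the determinant bookkeeping.
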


\begin{proof}
Let $C=(\alpha_1,\alpha_2,\cdots,\alpha_N)$ be a row vector. Direct calculations give
\begin{eqnarray*}
f_{00}
&=&\left\vert
\begin{array}{cc}
-\frac{{\rm i} p^*_{j} }{p_{i}+p^*_{j}}e^{\xi _{i}+\xi^*_{j}} & I \\
-I & \frac{\sigma\alpha_i\alpha^*_j}{p^*_{i}+p_{j}}\end{array}\right\vert
=\left\vert
\begin{array}{cc}
-\frac{{\rm i} p^*_{i} }{p_{j}+p^*_{i}}e^{\xi _{j}+\xi^*_{i}} & -I \\
I & \frac{\sigma\alpha_j\alpha^*_i}{p^*_{j}+p_{i}}\end{array}\right\vert \\
&=&\left\vert
\begin{array}{cc}
-\frac{{\rm i} p^*_{i} }{p_{j}+p^*_{i}}e^{\xi _{j}+\xi^*_{i}} & I \\
-I & \frac{\sigma\alpha_j\alpha^*_i}{p^*_{j}+p_{i}}\end{array}\right\vert
=\left\vert
\begin{array}{cc}
\frac{{\rm i} p_{i} }{p_{i}+p^*_{j}}e^{\xi _{i}+\xi^*_{j}} & I \\
-I & \frac{\sigma\alpha_i\alpha^*_j}{p^*_{i}+p_{j}}\end{array}\right\vert^*
=f^*_{10},
\\
\bar{g}_{00}
&=&
\left\vert
\begin{array}{ccc}
\frac{{\rm i} p_{i} }{p_{i}+p^*_{j}}e^{\xi _{i}+\xi^*_{j}} & I & \mathbf{0}^{T} \\
-I & \frac{\sigma\alpha_i\alpha^*_j}{p^*_{i}+p_{j}} & C^{T} \\
-\Phi^* & \mathbf{0} & 0\end{array}\right\vert =
\left\vert
\begin{array}{ccc}
\frac{{\rm i} p_{j} }{p_{j}+p^*_{i}}e^{\xi _{j}+\xi^*_{i}} & -I & -\Phi^{*T} \\
I & \frac{\sigma\alpha_j\alpha^*_i}{p^*_{j}+p_{i}} & \mathbf{0}^{T} \\
\mathbf{0} & C & 0\end{array}\right\vert \\
&=&\left\vert
\begin{array}{ccc}
\frac{{\rm i} p_{j} }{p_{j}+p^*_{i}}e^{\xi _{j}+\xi^*_{i}} & I & -\Phi^{*T} \\
-I & \frac{\sigma\alpha_j\alpha^*_i}{p^*_{j}+p_{i}} & \mathbf{0}^{T} \\
\mathbf{0} & -C & 0\end{array}\right\vert
=-g^*_{00},
\\
\bar{g}_{0,-1}
&=&\left\vert
\begin{array}{ccc}
\frac{{\rm i} p_{i} }{p_{i}+p^*_{j}}e^{\xi _{i}+\xi^*_{j}} & I & \mathbf{0}^{T} \\
-I & \frac{\sigma\alpha_i\alpha^*_j}{p^*_{i}+p_{j}}\left(-\frac{p_j}{p^*_i}\right) & \frac{C^{T}}{p^*_i} \\
-\Phi^* & \mathbf{0} & 0\end{array}\right\vert
=\left\vert
\begin{array}{ccc}
\frac{{\rm i} p_{j} }{p_{j}+p^*_{i}}e^{\xi _{j}+\xi^*_{i}} & -I & -\Phi^{*T} \\
I & \frac{\sigma\alpha_j\alpha^*_i}{p^*_{j}+p_{i}}\left(-\frac{p_i}{p^*_j}\right) & \mathbf{0}^{T} \\
\mathbf{0} & \frac{C}{p^*_i} & 0
\end{array}\right\vert \\
&=&\left\vert
\begin{array}{ccc}
\frac{{\rm i} p_{j} }{p_{j}+p^*_{i}}e^{\xi _{j}+\xi^*_{i}} & I & -\Phi^{*T} \\
-I & \frac{\sigma\alpha_j\alpha^*_i}{p^*_{j}+p_{i}}\left(-\frac{p_i}{p^*_j}\right) & \mathbf{0}^{T} \\
\mathbf{0} & -\frac{C}{p^*_i} & 0\end{array}\right\vert
=g^*_{0,1}\,.
\end{eqnarray*}
\end{proof}
Finally, let $x_1=x$ and $x_{-1}=-t$, we have
\begin{eqnarray}
\label{before-kp-12} && D_{t}g_{00}\cdot f_{00}=  g_{01}f_{10},
\\
\label{before-kp-22} && D_{x}g_{01}\cdot f_{10}=  -g_{00}f_{00},
\\
\label{before-kp-32} &&D_{x}f_{10}\cdot f_{00}=  -{\rm i}\sigma g_{00}\bar{g}_{00},
\\
\label{before-kp-42} &&D_{t}f_{10}\cdot f_{00}=  -{\rm i}\sigma g_{01}\bar{g}_{0,-1}.
\end{eqnarray}
Therefore, we arrive at exactly the same set of bilinear equations  (\ref{MTbtBL1})--(\ref{MTbtBL4}) by setting
\begin{equation}
f_{00}=f^*,\ \ f_{10}=f,\ \ g_{00}=h,\ \ \bar{g}_{00}=-h^*,\ \ g_{01}={\rm i}g,\ \ \bar{g}_{0,-1}=-{\rm i}g^*.
\end{equation}
In summary, we can give the $N$-bright soliton solution to the MT model by the following theorem:
\begin{theorem}
The MT model  (\ref{MTa})--(\ref{MTb}) admits the multi-bright soliton solution
$u=\frac{g}{f^*}$, $v=\frac{h}{f}$ where $f$, $f^*$, $g$ and $h$ are the following determinant solution
\begin{eqnarray}
\label{multi-bright-sol-theorem-01}&&
f=\left\vert
\begin{array}{cc}
A & I \\
-I & B\end{array}\right\vert
,\ \
f^*
=\left\vert
\begin{array}{cc}
A' & I \\
-I & B\end{array}\right\vert,
\\
\label{multi-bright-sol-theorem-02}
&&
g=-{\rm i}
\left\vert
\begin{array}{ccc}
A' & I & \Phi^{T} \\
-I & B & \mathbf{0}^{T} \\
\mathbf{0} & \frac{C^*}{p_j} & 0\end{array}\right\vert,
\ \
h=
\left\vert
\begin{array}{ccc}
A' & I & \Phi^{T} \\
-I & B & \mathbf{0}^{T} \\
\mathbf{0} & -C^* & 0\end{array}\right\vert\,,
\end{eqnarray}
where $I$ is an $N\times N$ identity matrix, $A$, $A'$, and $B$ are $N\times N$ matrices
whose entries are
\begin{equation*}
a_{ij}=\frac{{\rm i} p_{i} }{p_{i}+p^*_{j}}e^{\xi _{i}+\xi^*_{j}},\ \
a'_{ij}=-\frac{{\rm i} p^*_{j} }{p_{i}+p^*_{j}}e^{\xi _{i}+\xi^*_{j}},\ \
b_{ij}=\frac{\sigma \alpha_i\alpha^*_j}{p^*_{i}+p_{j}},
\end{equation*}
and $\textbf{0}$ is a $N$-component zero-row vector, $\Phi$ and $C$ are N-component row
vectors given by
\begin{equation*}
\Phi=(e^{\xi_1},e^{\xi_2},\cdots,e^{\xi_N}),\ \ C=(\alpha_1,\alpha_2,\cdots,\alpha_N),
\end{equation*}
with $\xi _{i}=p_{i}x-\frac{1}{p_{i}}t+\xi _{i0}$.
Here $p_i$, $\alpha_i$ and $\xi _{i0}$ are arbitrary complex parameters for $i=1,\cdots,N$.
\end{theorem}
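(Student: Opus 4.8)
The plan is to recognize that this theorem is purely a \emph{collection} of the machinery already assembled, so that essentially no new analytic work is needed. Lemma~1 supplies the bilinear identities \eqref{before-kp-1}--\eqref{before-kp-4} for the two-component KP--Toda tau functions; the reduction \eqref{reduction-conditon} collapses the two-dimensional lattice to its diagonal and yields \eqref{before-kp-11}--\eqref{before-kp-41}; Lemma~2 installs the complex-conjugate structure; and the change of variables $x_1=x$, $x_{-1}=-t$ produces \eqref{before-kp-12}--\eqref{before-kp-42}, which are exactly \eqref{MTbtBL1}--\eqref{MTbtBL4} under the dictionary $f_{00}=f^{*}$, $f_{10}=f$, $g_{00}=h$, $\bar g_{00}=-h^{*}$, $g_{01}=\mathrm{i}g$, $\bar g_{0,-1}=-\mathrm{i}g^{*}$. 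By the Proposition, once $f,f^{*},g,h$ solve \eqref{MTbtBL1}--\eqref{MTbtBL4} the fields $u=g/f^{*}$, $v=h/f$ solve \eqref{MTa}--\eqref{MTb}. Hence the only remaining task is to evaluate the four reduced tau functions $f_{10},f_{00},g_{00},g_{01}$ \emph{explicitly} and match them to the determinants \eqref{multi-bright-sol-theorem-01}--\eqref{multi-bright-sol-theorem-02}.

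First I would carry out the substitutions at the level of the matrix entries and border vectors. Setting $\mu=-\mathrm{i}$, $\nu=\sigma$, imposing $\bar p_j=p_j^{*}$, $\bar\xi_{j0}=\xi_{j0}^{*}$, $e^{\eta_{i0}}=\alpha_i$, $e^{\bar\eta_{i0}}=\alpha_i^{*}$, dropping the dummy flows $y_{\pm1}$ and writing $\xi_i=p_i x-\tfrac{1}{p_i}t+\xi_{i0}$, I would evaluate $a_{ij}(n)$ and $b_{ij}(m)$ at the indices actually occurring. At $n=1$ the lattice weight $-p_i/\bar p_j$ cancels the $\bar p_j$ in the numerator, giving $a_{ij}(1)=\tfrac{\mathrm{i}p_i}{p_i+p_j^{*}}e^{\xi_i+\xi_j^{*}}=a_{ij}$; at $n=0$ it gives $a_{ij}(0)=-\tfrac{\mathrm{i}p_j^{*}}{p_i+p_j^{*}}e^{\xi_i+\xi_j^{*}}$, which is the entry $a'_{ij}$ of $A'$ in the theorem; and $b_{ij}(0)=\tfrac{\sigma\alpha_i\alpha_j^{*}}{p_i^{*}+p_j}=b_{ij}$. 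For the borders, $\Phi_0=(e^{\xi_1},\dots,e^{\xi_N})=\Phi$ and $\bar\Psi_0=(\alpha_1^{*},\dots,\alpha_N^{*})=C^{*}$, while the weight $(-\bar q_j)^{-m}$ at $m=1$ inserts the factor $-1/p_j$, so that the border $-\bar\Psi_1$ becomes precisely the weighted row $C^{*}/p_j$.

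Then I would assemble these blocks into the bordered-determinant expressions listed in the proof of Lemma~1 and read off the identifications directly. Since $f_{nm}$, $g_{nm}$ all carry the block $A_n$ and $B_m$, the single reduced block $A=A_1$ enters $f$, the block $A'=A_0$ enters $f^{*}$, $g$ and $h$, and the common $B=B_0$ enters all four: thus $f=f_{10}$ and $f^{*}=f_{00}$ are exactly \eqref{multi-bright-sol-theorem-01}, $h=g_{00}$ (with border $-\bar\Psi_0=-C^{*}$) is the second determinant in \eqref{multi-bright-sol-theorem-02}, and $g=-\mathrm{i}g_{01}$ (from $g_{01}=\mathrm{i}g$, with border $-\bar\Psi_1=C^{*}/p_j$) is the first, the prefactor $-\mathrm{i}$ being exactly the one displayed. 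This closes the argument: the theorem's determinants \emph{are} the reduced tau functions, which by the preceding chain satisfy \eqref{MTbtBL1}--\eqref{MTbtBL4}, whence $u=g/f^{*}$, $v=h/f$ solve the MT model.

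The step I expect to be most delicate is not any single determinant evaluation but the consistent \emph{bookkeeping} of the lattice-shift weights $(-p_i/\bar p_j)^{n}$ and $(-\bar q_j)^{-m}$ together with the sign and conjugation conventions. One must verify that the same reduced blocks $A'$ and $B$ serve simultaneously in $f^{*}$, $g$ and $h$, and, crucially, that the conjugation relations $f_{00}=f_{10}^{*}$, $\bar g_{00}=-g_{00}^{*}$, $\bar g_{0,-1}=g_{0,1}^{*}$ furnished by Lemma~2 are exactly what force $u^{*}=g^{*}/f$ and $v^{*}=h^{*}/f^{*}$ to be compatible with the reality content of \eqref{MTbtBL2} and \eqref{MTbtBL4}. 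Once these conventions are pinned down, the theorem follows with no further input.
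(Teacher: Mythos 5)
Your proposal is correct and takes essentially the same route as the paper: Theorem 1 there is simply the summary of the Section 2 chain (Lemma 1, the diagonal reduction (\ref{reduction-conditon}), Lemma 2, the substitution $x_1=x$, $x_{-1}=-t$, and the dictionary $f_{00}=f^{*}$, $f_{10}=f$, $g_{00}=h$, $\bar g_{00}=-h^{*}$, $g_{01}=\mathrm{i}g$, $\bar g_{0,-1}=-\mathrm{i}g^{*}$, closed off by Proposition 1), and your entry-level evaluations $A_1=A$, $A_0=A'$, $B_0=B$, $\Phi_0=\Phi$, $-\bar{\Psi}_0=-C^{*}$, $-\bar{\Psi}_1=C^{*}/p_j$ all check out against (\ref{multi-bright-sol-theorem-01})--(\ref{multi-bright-sol-theorem-02}). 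The one subtlety worth making explicit is that $g_{01}$ by definition carries the block $B_1$, not $B_0$, so your statement that the common $B=B_0$ enters all four tau functions is valid only via the bordered-determinant expression for $g_{n,m+1}$ listed in the proof of Lemma 1, in which the rank-one update $B_{m+1}=B_m+\nu\,\Psi_m^{T}\bar{\Psi}_{m+1}$ has already been absorbed into the border row by elementary row operations --- precisely the lattice-weight bookkeeping you flagged as delicate, and it does go through.
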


\section{Dark solitons in the MT model}
\subsection{Bilinearization of the MT model under NVBC}
The bilinearization of the MT model  (\ref{MTa})--(\ref{MTb}) under NVBC is established
by the following proposition.
\begin{prop}
By means of the dependent variable transformations
\begin{equation} \label{var_tran1}
u=\rho_1 \frac{g}{f^{\ast} } e^{\mathrm{i}(1+\sigma \rho_1\rho_2) \left(\frac{\rho_2}{\rho_1}x+\frac{\rho_1}{\rho_2}t\right)}\,, \quad
v=\rho_2 \frac{{h}}{{f}} e^{\mathrm{i}(1+\sigma \rho_1\rho_2) \left(\frac{\rho_2}{\rho_1}x+\frac{\rho_1}{\rho_2}t\right)}\,,
\end{equation}
where $\rho_1$ and $\rho_2$ are real constants, the MT model (\ref{MTa})--(\ref{MTb}) is transformed into the following bilinear equations
\begin{eqnarray}
&& \displaystyle (\mathrm{i} D_{x}- \frac{\rho_2}{\rho_1}) g \cdot f= - \frac{\rho_2}{\rho_1} h f^{\ast}  \,,  \label{MTdkBL1} \\ [5pt]
&& \displaystyle (\mathrm{i}D_{x}  - \sigma \rho^2_2 ) f \cdot f^{\ast } =-\sigma \rho^2_2 hh^{\ast} \,, \label{MTdkBL2}  \\ [5pt]
&& \displaystyle (\mathrm{i} D_{t} -\frac{\rho_1}{\rho_2}) h \cdot f^{\ast}  = - \frac{\rho_1}{\rho_2}g f  \,,  \label{MTdkBL3} \\ [5pt]
&& \displaystyle  ( \mathrm{i} D_{t}  - \sigma \rho^2_1 ) f^{\ast} \cdot f  =-\sigma \rho^2_1 gg^{\ast} \,. \label{MTdkBL4}
\end{eqnarray}
\end{prop}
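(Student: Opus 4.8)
The plan is to mirror the substitution used for the vanishing-boundary case in the proof of the first proposition, but now carrying along the extra phase factor. Write $\theta=(1+\sigma\rho_1\rho_2)\left(\frac{\rho_2}{\rho_1}x+\frac{\rho_1}{\rho_2}t\right)$, which is real since $\rho_1,\rho_2$ are real, so that $u=\rho_1\frac{g}{f^*}e^{\mathrm{i}\theta}$ and $v=\rho_2\frac{h}{f}e^{\mathrm{i}\theta}$. Because $|e^{\mathrm{i}\theta}|=1$, the moduli simplify to $|v|^2=\rho_2^2\frac{hh^*}{ff^*}$ and $|u|^2=\rho_1^2\frac{gg^*}{f^*f}$, so that a common factor $e^{\mathrm{i}\theta}$ cancels out of every term of each MT equation after substitution. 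First I would substitute into (\ref{MTa}) and, exactly as in the VBC proof, write $u=\rho_1\frac{g}{f}\frac{f}{f^*}e^{\mathrm{i}\theta}$ so that $\mathrm{i}u_x$ splits into a piece acting on $g/f$ and a piece acting on $f/f^*$, plus an additional phase term $-\theta_x\frac{g}{f}\frac{f}{f^*}$.

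The essential observation is that $\theta$ has been designed precisely so that its derivatives decompose into exactly the constants appearing in the target bilinear equations. A direct computation gives $\theta_x=\frac{\rho_2}{\rho_1}+\sigma\rho_2^2$ and $\theta_t=\frac{\rho_1}{\rho_2}+\sigma\rho_1^2$. I would then break $-\theta_x\frac{g}{f}\frac{f}{f^*}$ into the two pieces $-\frac{\rho_2}{\rho_1}\frac{g}{f}\frac{f}{f^*}$ and $-\sigma\rho_2^2\frac{g}{f}\frac{f}{f^*}$, assigning the first to the group containing the linear term $\frac{\rho_2}{\rho_1}\frac{h}{f}$ and the second to the group containing the nonlinear term $\sigma\rho_2^2\frac{g}{f}\frac{f}{f^*}\frac{hh^*}{ff^*}$. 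Collecting each group over a common denominator and using $\left(\frac{g}{f}\right)_x=D_xg\cdot f/f^2$ and $\left(\frac{f}{f^*}\right)_x=D_xf\cdot f^*/(f^*)^2$, the first group reduces to $(\mathrm{i}D_x-\frac{\rho_2}{\rho_1})g\cdot f+\frac{\rho_2}{\rho_1}hf^*=0$ and the second to $(\mathrm{i}D_x-\sigma\rho_2^2)f\cdot f^*+\sigma\rho_2^2hh^*=0$, which are exactly (\ref{MTdkBL1})--(\ref{MTdkBL2}).

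The treatment of (\ref{MTb}) is entirely symmetric: factoring $v=\rho_2\frac{h}{f^*}\frac{f^*}{f}e^{\mathrm{i}\theta}$ and using $\theta_t=\frac{\rho_1}{\rho_2}+\sigma\rho_1^2$, I would split the phase contribution $-\theta_t\frac{h}{f^*}\frac{f^*}{f}$ into its $-\frac{\rho_1}{\rho_2}$ and $-\sigma\rho_1^2$ pieces, pair them with the linear term $\frac{\rho_1}{\rho_2}\frac{g}{f^*}$ and the nonlinear term $\sigma\rho_1^2\frac{h}{f^*}\frac{f^*}{f}\frac{gg^*}{f^*f}$ respectively, and read off (\ref{MTdkBL3})--(\ref{MTdkBL4}) after clearing denominators. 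I do not expect a genuine obstacle, since the manipulation is purely algebraic; the only delicate point — and the real content of the proposition — is verifying that this specific phase makes the constant-shift terms match the Hirota-operator shifts exactly. The identity $(1+\sigma\rho_1\rho_2)\frac{\rho_2}{\rho_1}=\frac{\rho_2}{\rho_1}+\sigma\rho_2^2$ (and its $t$-counterpart) is what forces the two groupings to close; any other choice of phase would leave an uncancelled residual term and the bilinearization would fail.
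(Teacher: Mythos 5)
Your proposal is correct and follows essentially the same route as the paper: factor $u=\rho_1\frac{g}{f}\frac{f}{f^*}e^{\mathrm{i}\theta}$ (and symmetrically for $v$), split the phase derivative via $(1+\sigma\rho_1\rho_2)\frac{\rho_2}{\rho_1}=\frac{\rho_2}{\rho_1}+\sigma\rho_2^2$ (resp. its $t$-counterpart), and group the resulting terms into two brackets — one pairing the $(g/f)_x$ derivative with the linear term, the other pairing the $(f/f^*)_x$ derivative with the nonlinear term — whose separate vanishing yields the four bilinear equations. The paper's proof is exactly this computation, so there is nothing to add.
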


\begin{proof}
By rewriting the dependent variable transformations (\ref{var_tran1})
\begin{equation*}
u=\rho_1 \frac{g}{f}   \frac{f}{f^{\ast} }  e^{\mathrm{i}(1+\sigma\rho_1\rho_2) \left(\frac{\rho_2}{\rho_1}x+\frac{\rho_1}{\rho_2}t\right)}\,, \quad
v=\rho_2 \frac{{h}}{f^{\ast}} \frac{f^{\ast}}{{f}} e^{\mathrm{i}(1+\sigma\rho_1\rho_2) \left(\frac{\rho_2}{\rho_1}x+\frac{\rho_1}{\rho_2}t\right)}
\end{equation*}
and substituting into Eq.(\ref{MTa}), one has
 \begin{eqnarray}
&& \left[ \mathrm{i}
\left(\frac{g}{f}\right)_x \frac{f}{f^{\ast} }  - \frac{\rho_2}{\rho_1}\frac{g}{f^{\ast}}  + \frac{\rho_2}{\rho_1} \frac{h}{f} \right] \nonumber \\
&&+ \left[ \mathrm{i}  \frac{g}{f} \left( \frac{f}{f^{\ast} }\right)_x -  \sigma\rho_2^2 \frac{g}{f^{\ast} }
 +  \sigma \rho_2^2 \frac{hh^*}{ff^*} * \frac{g}{f^{\ast} } \right] =0\,.
\end{eqnarray}
Bilinear equations (\ref{MTdkBL1}) and  (\ref{MTdkBL2}) are deduced by taking zero for each group inside bracket. Similarly, we can drive bilinear equations
(\ref{MTdkBL3}) and  (\ref{MTdkBL4}) by substituting (\ref{var_tran1})  into Eq.(\ref{MTb}).
\end{proof}

\subsection{Discrete KP equation and bilinear equations for the KP-Toda hierarchy}
Let us start with a concrete form of the Gram determinant expression of the
tau functions for the extended KP hierarchy with negative flows
\begin{equation}
\tau _{nkl}=\left\vert m_{ij}^{nkl}\right\vert _{1\leq i,j\leq N},
\label{KP-tau}
\end{equation}%
where
\begin{equation*}
m_{ij}^{nkl}=\delta _{ij}+\frac{\mathrm{i}p_{i}}{p_{i}+\bar{p}_{j}}\varphi _{i}^{nkl}\psi_{j}^{nkl},
\end{equation*}%
\begin{equation*}
\varphi _{i}^{nkl}=p_{i}^{n}(p_{i}-a)^{k}(p_{i}-b)^{l}e^{\xi _{i}},\quad
\psi
_{j}^{nkl}=\left(-\frac{1}{\bar{p}_{j}}\right)^{n}\left(-\frac{1}{\bar{p}_{j}+a}\right)^{k}
\left(-\frac{1}{\bar{p}_{j}+b}\right)^{l}e^{\bar{\xi}_{j}},
\end{equation*}
with
\begin{equation*}
\xi _{i}=\frac{1}{p_{i}}x_{-1}+p_{i}x_{1}+\frac{1}{p_{i}-a}t_{a}+\frac{1}{p_{i}-b}t_{b}+\xi _{i0},
\end{equation*}%
\begin{equation*}
\bar{\xi}_{j}=\frac{1}{\bar{p}_{j}}x_{-1}+\bar{p}_{j}x_{1}+\frac{1}{\bar{p}%
_{j}+a}t_{a}+\frac{1}{\bar{p}_{j}+b}t_{b}+\bar{\xi}_{j0}.
\end{equation*}%
Here $p_{i}$, $\bar{p}_{j}$, $\xi _{i0}$, $\bar{\xi}_{j0}$, $a$, $b$ are
constants.  We have the following lemma regarding the bilinear equations satisfied by above tau function:
\begin{lem}
The discrete KP equation generates a set of bilinear equations
\begin{eqnarray}
\label{dark-before-eq1}&&
(D_{x_1}+a) \tau_{n,k+1,l}\cdot \tau_{n+1,k,l} =a \tau_{n+1,k+1,l} \tau_{n,k,l}, \label{KPbilinear1} \\
\label{dark-before-eq2}&&
(bD_{x_{-1}}+1) \tau_{n,k,l+1}\cdot \tau_{n,k,l} = \tau_{n-1,k,l+1} \tau_{n+1,k,l}, \label{KPbilinear2} \\
\label{dark-before-eq3}&&
(aD_{t_{a}}-1)\tau_{n+1,k,l}\cdot \tau_{n,k,l}=-\tau_{n+1,k-1,l}\tau_{n,k+1,l}, \label{KPbilinear3} \\
\label{dark-before-eq4}&&
(bD_{t_{b}}-1)\tau_{n+1,k,l}\cdot \tau_{n,k,l}=-\tau_{n+1,k,l-1}\tau _{n,k,l+1}\,. \label{KPbilinear4}
\end{eqnarray}
satisfied by above tau function (\ref{KP-tau}).
\end{lem}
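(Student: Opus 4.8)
The plan is to prove the lemma by exactly the determinantal machinery already used for the two-component KP-Toda hierarchy above: first establish the elementary differentiation and one-step shift rules for the scalar building blocks $\varphi_i^{nkl}$ and $\psi_j^{nkl}$, then translate the continuous flows and the lattice shifts of $\tau_{nkl}$ into bordered determinants, and finally recognize each of (\ref{KPbilinear1})--(\ref{KPbilinear4}) as an instance of the Jacobi (Pl\"ucker) identity for determinants.

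First I would record the flow rules $\partial_{x_1}\varphi_i = p_i\varphi_i$, $\partial_{x_{-1}}\varphi_i = p_i^{-1}\varphi_i$, $\partial_{t_a}\varphi_i = (p_i-a)^{-1}\varphi_i$, $\partial_{t_b}\varphi_i = (p_i-b)^{-1}\varphi_i$, together with the analogous ones for $\psi_j$ carrying $\bar p_j$, $\bar p_j+a$, $\bar p_j+b$, and the single-step lattice shifts $\varphi_i^{n+1,k,l}=p_i\varphi_i^{nkl}$, $\varphi_i^{n,k+1,l}=(p_i-a)\varphi_i^{nkl}$, $\varphi_i^{n,k,l+1}=(p_i-b)\varphi_i^{nkl}$ and $\psi_j^{n+1,k,l}=-\bar p_j^{-1}\psi_j^{nkl}$, $\psi_j^{n,k+1,l}=-(\bar p_j+a)^{-1}\psi_j^{nkl}$, $\psi_j^{n,k,l+1}=-(\bar p_j+b)^{-1}\psi_j^{nkl}$. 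The crucial observation is that the Cauchy-type prefactor $\frac{\mathrm{i}p_i}{p_i+\bar p_j}$ cancels against the partial-fraction combinations produced by each flow: for instance $\partial_{x_1}m_{ij}=\mathrm{i}p_i\varphi_i\psi_j$ and $\partial_{t_a}m_{ij}=\frac{\mathrm{i}p_i}{(p_i-a)(\bar p_j+a)}\varphi_i\psi_j$, so that every flow acts on the matrix of entries as a \emph{rank-one} (outer-product) modification.

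Next I would exploit this rank-one structure, via the standard differentiation formula for a determinant, to express each $\partial_\bullet\tau_{nkl}$ and each shifted tau function as an $(N+1)\times(N+1)$ determinant obtained by bordering $\tau_{nkl}$ with the appropriate $\varphi$-column and $\psi$-row. The mixed continuous-plus-discrete Hirota operators appear precisely at this stage, through trivial algebraic splittings of the flow weights into the lattice-shift weights, namely $p_i=(p_i-a)+a$ and $\bar p_j=(\bar p_j+b)-b$ and their companions. Writing the $x_1$-weight $p_i$ as $(p_i-a)+a$ recombines the bordered determinant for $\partial_{x_1}\tau_{n,k+1,l}$ into a piece reproducing the one-step $k$-shift plus a piece proportional to $a\tau$, which is exactly what converts $D_{x_1}$ into $(D_{x_1}+a)$ in (\ref{KPbilinear1}); the factor $b$ and the constants $1$ in (\ref{KPbilinear2})--(\ref{KPbilinear4}) come from the analogous identities for the $x_{-1}$-, $t_a$- and $t_b$-flows. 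Assembling the resulting singly-bordered $(N+1)\times(N+1)$ determinants together with the doubly-bordered $(N+2)\times(N+2)$ determinant, the Jacobi identity then yields each bilinear equation directly, just as in the proof of the preceding lemma.

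I expect the main obstacle to be bookkeeping rather than conceptual: pinning down the exact border vectors for the mixed combinations, tracking the signs generated by the $(-1)$ factors in the $\psi_j$ shifts and by the bordering operation, and verifying that the partial-fraction recombinations reproduce \emph{exactly} the coefficients $a$, $b$, $1$ and the correct overall sign on the right-hand sides of (\ref{KPbilinear1})--(\ref{KPbilinear4}). Once the bordered determinants are correctly matched, each application of the Jacobi identity is mechanical and completes the proof.
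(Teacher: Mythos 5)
Your proposal is sound, but it takes a genuinely different route from the paper. You prove the lemma by direct verification: record the rank-one action of each flow on the Gram matrix (your partial-fraction computations, e.g.\ $\partial_{t_a}m_{ij}=\frac{\mathrm{i}p_i}{(p_i-a)(\bar p_j+a)}\varphi_i\psi_j$, are correct), express derivatives and shifts of $\tau_{nkl}$ as bordered determinants, use splittings such as $p_i=(p_i-a)+a$ to produce the mixed operators $(D_{x_1}+a)$, $(aD_{t_a}-1)$, and close each identity with Jacobi's determinant identity --- exactly the machinery the paper uses for the earlier two-component KP-Toda lemma in the bright-soliton section. The paper instead proves this lemma top-down: it starts from the discrete KP (Hirota--Miwa) equation with its known Gram-type solution due to Ohta et al., reparametrizes the lattice parameters, performs a Miwa transformation (continuum limit $b\to 0$, $x_1=-k_2 b$) to obtain equation (\ref{KPbilinear1}), and then generates (\ref{KPbilinear2})--(\ref{KPbilinear4}) not by fresh determinant computations but by a duality argument in which positive and negative flows are exchanged through reparametrization of the wave numbers ($p_i^{-1}\to p_i$, $p_i^{-1}-a^{-1}\to\widetilde p_i$, relabelling of the discrete indices). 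Indeed, the paper's own remark explicitly acknowledges your route (``We can also give proof of above bilinear equations via determinant identities such as the Jacobi identity'') but prefers its derivation as ``more systematic instead of technical.'' The trade-off: your approach is self-contained and mechanical, requiring no appeal to the discrete KP solution or to limiting procedures, at the cost of heavy bookkeeping (four separate bordered-determinant/Jacobi computations); the paper's approach derives all four equations from a single master identity, and the structural link to the discrete KP equation is what the authors later exploit to propose integrable discretizations of the MT model --- information your proof would not expose. One caution if you carry out your plan: the four tau functions entering each bilinear equation sit at different lattice sites, so before bordering you must normalize them to a common core matrix by row and column operations (as the paper does in the bright-soliton lemma); the $(-1)$ factors in the $\psi_j$ shifts and the bordering signs must then be tracked carefully, since they are precisely what produce the minus signs on the right-hand sides of (\ref{KPbilinear3})--(\ref{KPbilinear4}).
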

\begin{proof}
The discrete KP equation, or the so-called Hirota--Miwa equation \cite{Hirota-1981, Miwa-1982},
\begin{equation}
\label{H-M-1}
(a_1-a_2)\tau_{12}\tau_3+(a_2-a_3)\tau_{23}\tau_1+(a_3-a_1)\tau_{13}\tau_2=0,
\end{equation}%
is a three-dimensional discrete integrable system where lattice parameters $a_k$ are distinct constants, $k=1, 2, 3$, and for $\tau=\tau(k_1,k_2,k_3)$ each subscript $i$ denotes a forward shift in the corresponding discrete variable $k_i$.
It is found by Ohta {\it et al.} that the discrete KP equation admits a general solution in terms of the following Gram-type  determinant \cite{OHTI93}
\begin{equation}
\tau (k_{1},k_{2},k_{3})=\Big|c_{ij}+\frac{d_{ij}}{p_{i}+q_{j}} \left(- \frac{p_{i}-a_{1}}{q_{j}+a_{1}}\right) ^{k_{1}} \left(- \frac{p_{i}-a_{2}}{q_{j}+a_{2}}\right) ^{k_{2}} \left(-\frac{p_{i}-a_{3}}{q_{j}+a_{3}}\right) ^{k_{3}}\Big|\,.
\label{HW-Gram}
\end{equation}
Notice that the element in  (\ref{HW-Gram}) can be rewritten as
\begin{eqnarray*}
&& c_{ij} + \frac{d_{ij}}{p_{i}+q_{j}} \left( -\frac{\widetilde{p}_{i}}{\widetilde{q}_{j}}\right) ^{k_{1}}
  \left( -\frac{\widetilde{p}_{i}+a}{%
\widetilde{q}_{j}-a}\right) ^{k_{3}} \left( \frac{1-b\widetilde{p}_{i}}{%
1+b\widetilde{q}_{j}}\right) ^{k_{2}}  \left( \frac{1-a_{3}p^{-1}_{i}}{%
1+a_{3}q^{-1}_{j}}\right) ^{k_{3}}
\end{eqnarray*}
by reparametrizing ${p}_i-a_1=\widetilde{p}_{i}$, ${q}_i+a_1=\widetilde{q}_{i}$ and $a_2-a_1=b^{-1}$ and $a=a_{1}$.
Set $a_3=0$ and redefine  $k_{3}=k$, $k_1=n$, then the discrete KP equation (\ref{H-M-1}) has the degenerate form
\begin{eqnarray}
&& a\tau_{n,k}(k_2+1) \tau _{n+1,k+1}(k_{2}) +b^{-1} \tau _{n,k+1}(k_2)\tau _{n+1,k}(k_2+1)  \nonumber \\
&& (b^{-1}+a) \tau_{n,k+1}(k_2+1) \tau _{n+1,k+1}(k_{2})=0.
\label{full-discrete}
\end{eqnarray}
Applying Miwa transformation by taking  $b \to 0$ and $x_{1} =-k_2b$, i.e., $\tau_{n,k}(k_2+1) \to \tau_{n,k}- b \partial_{x_1} \tau_{n,k}$
one obtains
\begin{equation}
(D_{x_1}+a)  \tau_{n+1,k} \cdot \tau_{n,k+1} =a \tau_{n+1,k+1} \tau_{n,k},
\end{equation}
which is equivalent to  (\ref{KPbilinear1}) by taking $\widetilde{p}_{i} \to p_{i}$, $\widetilde{q}_{i} \to \bar{p}_{i}$, $c_{ij}=\delta_{ij}$, $d_{ij}= \mathrm{i}p_{i}$, $a \to -a$ and adding $l$ to each tau function.

In what follows, we further show that bilinear equation (\ref{KPbilinear1})  can generate  (\ref{KPbilinear2})--(\ref{KPbilinear4}) by a dual relation between positive flow and negative flow. To be specific, we notice that
\begin{eqnarray*}
&&\delta_{ij}+ \frac{\mathrm{i}p_{i}}{p_{i}+\bar{p}_{j}} \left(-\frac{{p}_{i}}{\bar{p}_{j}}\right)^{n}\left(-\frac{{p}_{i}-a}{\bar{p}_{j}+a}\right)^{k} e^{\xi_i+\bar{\xi}_j} \\
&&\to  \delta_{ij}+ \frac{\mathrm{i}p_{i}}{p_{i}+\bar{p}_{j}} \left(-\frac{{p}^{-1}_{i}}{\bar{p}^{-1}_{j}}\right)^{-(n+k)}\left(-\frac{{p}^{-1}_{i}-a^{-1}}{\bar{p}^{-1}_{j}+a^{-1}}\right)^{k} e^{\xi_i+\bar{\xi}_j}\,.
\end{eqnarray*}
Therefore, $x_1$ and $x_{-1}$ is exchangeable by redefining ${p}^{-1}_{i} \to {p}_{i}$, $\bar{p}^{-1}_{j} \to \bar{p}_{j}$. Furthermore, by redefining index $n+k \to -n$, $k \to l$ and  $a^{-1} = b$,  (\ref{KPbilinear1}) is converted into
\begin{eqnarray*}
&&
(D_{x_{-1}}+b^{-1})  \tau_{n,l+1}\cdot \tau_{n,l} =b^{-1} \tau_{n-1,l} \tau_{n+1,l}
\end{eqnarray*}
which is nothing but Eq. (\ref{KPbilinear2}). 

On the other hand,  by reparametrizing  $p^{-1}_{i} -a^{-1}=\widetilde{p}_{i}$,  $\bar{p}^{-1}_{j} +a^{-1}=\widetilde{\bar{p}}_{j}$
\begin{eqnarray*}
&& c_{ij}+ \frac{d_{ij}}{p_{i}+\bar{p}_{j}}  \left(-\frac{{p}_{i}}{\bar{p}_{j}}\right)^{n}\left(-\frac{{p}_{i}-a}{\bar{p}_{j}+a}\right)^{k} e^{\xi_i+\bar{\xi}_j} \\
&&\to  c_{ij}+ \frac{d_{ij}}{p_{i} +\bar{p}_{j}}  \left(-\frac{\widetilde{p}_{i}}{\widetilde{q}_{j}}\right)^{k}\left(-\frac{\widetilde{p}_{i}+a^{-1}}{\widetilde{q}_{j}-a^{-1}}\right)^{-(n+k)}
e^{\xi_i+\bar{\xi}_j}  
\end{eqnarray*}
by redefining  indices $k=n'$, $n+k=-k'$. Since
\begin{eqnarray*}
&&
\xi_i = p_i x_1 +  p^{-1}_i x_{-1} \to \frac{1}{\widetilde{p}_{i}+a^{-1}} x_1 + (\widetilde{p}_{i}+a^{-1}) x_{-1}, \\
&&\bar{\xi}_j = \bar{p_j} x_1 +  \bar{p}_j^{-1} x_{-1} \to \frac{1}{\widetilde{\bar{p}}_{j}-a^{-1}} x_1 + (\widetilde{\bar{p}}_{i}-a^{-1}) x_{-1}\,,
\end{eqnarray*}
which makes positive flow  ($x_1$) and negative flow ($x_{-1}$) exchangeable. Thus, by taking $x_1 \to t_a$,  and $a^{-1} \to -a$, one obtains
\begin{eqnarray*}
&&(D_{x_1}  + a) \tau _{n,k+1}\cdot \tau _{n+1,k}=a \tau _{n,k} \tau _{n+1,k+1} \\
&& \to (D_{t_a}  - a^{-1}) \tau _{n'+1,k'}\cdot \tau _{n',k'}= -a^{-1} \tau _{n'+1,k-1} \tau _{n',k'+1}  \\
&& \to (aD_{t_{a}}-1)\tau _{n'+1,k'}\cdot \tau _{n',k'}=-\tau
_{n'+1,k'-1}\tau _{n',k'+1}
\end{eqnarray*}
which is exactly Eq. (\ref{KPbilinear3}) by dropping the prime.  Eq. (\ref{KPbilinear4}) is just a parallel copy of Eq. (\ref{KPbilinear3}) from  $(a, t_a, k)  \to (b,t_b,l)$.
\end{proof}
\begin{remark} It is very interesting to observe that the discrete KP equation can generate KP-Toda hierarchy with asymmetric positive flow and negative flow. Furthermore, the positive flow and negative flow are exchangeable by reparametrizing the wave numbers in the tau function. 
\end{remark}
\begin{remark} It is noted that some of the bilinear equations derived above are also bilinear equations of the Fokas-Lenells equation \cite{MatsunoFL2}, the complex short pulse equation 
\cite{FMO_ComplexSPE} and the modified Camassa-Holm equation \cite{Matsuno-mCH,ZYFmCH-discrete}. In other words, the tau function behind these equations is the same before the reductions. 
\end{remark}
\begin{remark} We can also give proof of above bilinear equations via determinant identities such as the Jacobi identity. However, the proof we give here is more systematic instead of technical. Moreover, in scrutinizing the connection between the discrete KP equation and the nonlinear PDEs, it makes possible for us to construct the integrable discrete analogues of the nonlinear PDEs. 
\end{remark}
\subsection{Reduction to the dark soliton of the MT model}
In what follows, we briefly show the reduction processes of reducing bilinear
equations of extended KP hierarchy (\ref{KPbilinear1})--(\ref{KPbilinear4}) to the bilinear equation (\ref{MTdkBL1})--(\ref{MTdkBL4}).
Firstly, we start with dimension reduction by noting that the determinant expression of $\tau _{nkl}$,
\begin{equation*}
\tau _{nkl}=\left\vert \delta _{ij}+\frac{\mathrm{i}p_{i}}{p_{i}+\bar{p}_{j}}\varphi
_{i}^{nkl}\psi _{j}^{nkl}\right\vert _{1\leq i,j\leq N}\,,
\end{equation*}
can be alternatively expressed by%
\begin{equation*}
\tau _{nkl}=\prod_{i=1}^N \varphi _{i}^{nkl} \psi _{i}^{nkl} \left\vert \frac{\delta _{ij}}{ \varphi_{i}^{nkl}\psi _{i}^{nkl}}+\frac{\mathrm{i}p_{i}}{p_{i}+\bar{p}_{j}} \right\vert _{1\leq i,j\leq N}\,,
\end{equation*}
by dividing $j$-th column by $\psi _{j}^{nkl}$ and  $i$-th row by $%
\varphi _{i}^{nkl}$ for $1\leq i,j\leq N$.

By imposing the reduction condition
\begin{equation}\label{condition-1}
(p_i-b)(\bar{p}_i+b)=b(a-b),
\end{equation}
or
\begin{equation}
|p_i|^2=-\frac{b}{a-b}(p_i-a)(\bar{p}_i+a),
\end{equation}
one can easily check the following relations hold
\begin{eqnarray}
&& p_i+\bar{p}_i=b(a-b) \left( \frac{1}{p_i-b} + \frac{1}{\bar{p}_i+b} \right),\\
&& \frac{1}{p_i}+\frac{1}{\bar{p}_i}=-\frac{a-b}{b} \left( \frac{1}{p_i-a} + \frac{1}{\bar{p}_i+a} \right),\\
&& \left( -\frac{p_{i}}{\overline{p}_{i}}\right)
\left( -\frac{p_{i}-a}{\overline{p}_{i}+a}\right)
\left( -\frac{p_{i}-b}{\overline{p}_{i}+b}\right) ^{-1}=1,
\end{eqnarray}
which implies
\begin{eqnarray}
&& \partial_{t_b}=\frac{1}{b(a-b)}\partial_{x_1},   \\
&& \partial_{t_a}=-\frac{b}{a-b}\partial_{x_{-1}},\\
&& \tau_{n-1,k,l+1}= \tau_{n,k+1,l},\ \ \tau_{n,k,l+1}= \tau_{n+1,k+1,l}\,.
\end{eqnarray}
Therefore the bilinear equations (\ref{KPbilinear1})--(\ref{KPbilinear4}) can be recast into
\begin{eqnarray}
&& (D_{x_1}+a) \tau_{n,k+1,l}\cdot \tau_{n+1,k,l} =a \tau_{n,k,l+1} \tau_{n,k,l}, \label{dark-before-eeq1}\\
&& (bD_{x_{-1}}+1) \tau_{n,k,l+1}\cdot \tau_{n,k,l} = \tau_{n,k+1,l} \tau_{n+1,k,l},  \label{dark-before-eeq3}\\
&& (-\frac{ab}{(a-b)}D_{x_{-1}}-1)\tau_{n+1,k,l}\cdot \tau_{n,k,l}=-\tau_{n+1,k-1,l}\tau_{n,k+1,l},\\
&& (\frac{1}{a-b}D_{x_{1}}-1)\tau_{n+1,k,l}\cdot \tau_{n,k,l}=-\tau_{n+1,k,l-1}\tau _{n,k,l+1}\,. \label{dark-before-eeq4}
\end{eqnarray}
By setting
\begin{equation}
x_1=-\frac{\rho_2}{{\rm i}a\rho_1}x,\ \ x_{-1}=-\frac{\rho_1 b}{{\rm i}\rho_2}t,\ \
\end{equation}
i.e.,
\begin{equation}
\partial_{x_1}= -\frac{{\rm i}a\rho_1}{\rho_2}  \partial_{x},\ \ \partial_{x_{-1}}=-\frac{{\rm i}\rho_2}{\rho_1 b} \partial_{t},
\end{equation}
and assuming $b=a(1+\sigma \rho_1\rho_2)$, we have the following  bilinear equations
\begin{eqnarray}
\label{dark-before-eeeeq1}&&
( {\rm i}D_{x}-\frac{\rho_2}{\rho_1}) \tau_{n,k+1,l}\cdot \tau_{n+1,k,l} =-\frac{\rho_2}{\rho_1}\tau_{n,k,l+1} \tau_{n,k,l},\\
\label{dark-before-eeeeq2}&&
({\rm i} D_{t} - \frac{\rho_1}{\rho_2}) \tau_{n,k,l+1}\cdot \tau_{n,k,l} = - \frac{\rho_1}{\rho_2}\tau_{n,k+1,l} \tau_{n+1,k,l},\\
\label{dark-before-eeeeq3}&&
( -{\rm i}D_{t}-\sigma\rho^2_1)\tau_{n+1,k,l}\cdot \tau_{n,k,l}=-\sigma\rho^2_1\tau_{n+1,k-1,l}\tau_{n,k+1,l},\\
\label{dark-before-eeeeq4}&&
({\rm i}D_{x}-\sigma\rho^2_2)\tau_{n+1,k,l}\cdot \tau_{n,k,l}=-\sigma\rho^2_2\tau_{n+1,k,l-1}\tau _{n,k,l+1},
\end{eqnarray}

Next, we proceed to the complex conjugate reduction. To this end, by taking $a$ and $b$ pure imaginary,
and letting $\bar{p_i}$ to be complex conjugate of $p_i$: $\bar{p_i}=p^*_i$ and $\bar{\xi}_{i0}=\xi^*_{i0}$ and $n=-1,k=0,l=0$, one can find that
\begin{equation}
\tau_{-1,0,0}=\tau^*_{0,0,0},\ \ \tau_{-1,1,0}=\tau^*_{0,-1,0},\ \ \tau_{-1,0,1}=\tau^*_{0,0,-1}
\end{equation}
In summary, by defining
\begin{equation*}
\tau_{0,0,0,}=f, \ \tau_{-1,0,0}=f^*, \  \tau_{-1,1,0}=g, \  \tau_{0,-1,0}=g^*, \  \tau_{-1,0,1}=h, \  \tau_{0,0,-1}=h^*
\end{equation*}%
we arrive at exactly the set of bilinear equations (\ref{KPbilinear1})--(\ref{KPbilinear4}). Therefore, the reduction process is complete. As a result, we can provide the determinant solution to the MT model  by the following theorem.
In summary, we can give the multi-dark soliton solution to the MT system by the following theorem by taking $a={\rm i}\alpha$ and $b={\rm i}\alpha(1+\sigma\rho_1\rho_2)$.
\begin{theorem}
The MT system  (\ref{MTa})--(\ref{MTb}) admits the multi-dark soliton solution
$$
u=\rho_1 \frac{g}{f^{\ast} } e^{\mathrm{i}(1+\sigma\rho_1\rho_2) \left(\frac{\rho_2}{\rho_1}x+\frac{\rho_1}{\rho_2}t\right)}\,, \quad
v=\rho_2 \frac{{h}}{{f}} e^{\mathrm{i}(1+\sigma\rho_1\rho_2) \left(\frac{\rho_2}{\rho_1}x+\frac{\rho_1}{\rho_2}t\right)}\,,
$$
where $f$, $f^*$, $g$ and $h$ are the following $N \times N$ determinant
\begin{eqnarray}
\label{multi-dark-sol-theorem-01}&&
f=\left\vert \delta _{ij}+\frac{\mathrm{i}p_{i}}{p_{i}+{p}^*_{j}}
e^{\xi _{i}+\bar{\xi}_{j}} \right\vert \,, \ \
f^*=\left\vert \delta _{ij}-\frac{\mathrm{i}{p}^*_{j}}{p_{i}+{p}^*_{j}}
e^{\xi _{i}+{\xi}^*_{j}} \right\vert \,,
\\
\label{multi-dark-sol-theorem-02}&&
g=\left\vert \delta _{ij}+\frac{-\mathrm{i}p^*_{j}}{p_{i}+{p}^*_{j}}
\left(- \frac{p_{i}-{\rm i}\alpha}{{p}^*_{j}+{\rm i}\alpha}\right)
e^{\xi _{i}+{\xi}^*_{j}} \right\vert \,,
\\
\label{multi-dark-sol-theorem-03}&&
h=\left\vert \delta _{ij}+\frac{-\mathrm{i}p^*_{j}}{p_{i}+{p}^*_{j}}
\left[- \frac{p_{i}-{\rm i}\alpha(1+\sigma\rho_1\rho_2)}{{p}^*_{j}+{\rm i}\alpha(1+\sigma\rho_1\rho_2)}\right]
e^{\xi _{i}+{\xi}^*_{j}} \right\vert \,,
\end{eqnarray}
with
\begin{eqnarray*}
\xi _{i}=\frac{\rho_2}{\alpha\rho_1} p_{i}x -\frac{\rho_1  \alpha (1+\sigma\rho_1\rho_2) }{ \rho_2}\frac{t}{p_{i}} +\xi_{i0}.
\end{eqnarray*}
Here $p_i$, $\xi_{i0}$ are complex constants and $\alpha$ is a real constant which need to satisfy the constraint condition:
\begin{equation}\label{constraint-condition}
[p_i-{\rm i}\alpha(1+\sigma\rho_1\rho_2)][p^*_i+{\rm i}\alpha(1+\sigma\rho_1\rho_2)]=\alpha^2\rho_1\rho_2(\sigma+\rho_1\rho_2)\,.
\end{equation}
\end{theorem}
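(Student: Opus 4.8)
The plan is to derive the four MT bilinear equations under NVBC, namely (\ref{MTdkBL1})--(\ref{MTdkBL4}), from the extended-KP bilinear system (\ref{KPbilinear1})--(\ref{KPbilinear4}) guaranteed by the preceding lemma, and then to read off the determinant solution by evaluating the Gram-type tau function (\ref{KP-tau}) at suitable base indices. Since that lemma already certifies that $\tau_{nkl}$ satisfies (\ref{KPbilinear1})--(\ref{KPbilinear4}), the entire argument reduces to a chain of specializations that collapse the three discrete indices $(n,k,l)$ and the two continuous flows $x_{\pm1}$ down to the single complex field pair $(u,v)$ of the MT model.

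The first step is a dimension reduction imposed through the algebraic constraint (\ref{condition-1}), $(p_i-b)(\bar p_i+b)=b(a-b)$, on each wave number. The crucial point is that this one condition forces three consequences at once: the $t_b$ flow becomes proportional to $\partial_{x_1}$, the $t_a$ flow becomes proportional to $\partial_{x_{-1}}$, and the index shifts obey $\tau_{n-1,k,l+1}=\tau_{n,k+1,l}$ and $\tau_{n,k,l+1}=\tau_{n+1,k+1,l}$. Substituting these into (\ref{KPbilinear1})--(\ref{KPbilinear4}) and eliminating the auxiliary flows $t_a,t_b$ yields the reduced system (\ref{dark-before-eeq1})--(\ref{dark-before-eeq4}), now expressed solely through $x_1$ and $x_{-1}$.

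Next I would remove the remaining freedom by the scaling $x_1=-\rho_2/({\rm i}a\rho_1)\,x$ and $x_{-1}=-\rho_1 b/({\rm i}\rho_2)\,t$, together with the choice $b=a(1+\sigma\rho_1\rho_2)$; matching coefficients then turns (\ref{dark-before-eeq1})--(\ref{dark-before-eeq4}) into (\ref{dark-before-eeeeq1})--(\ref{dark-before-eeeeq4}). The last reduction is the complex-conjugate one: taking $a={\rm i}\alpha$, $b={\rm i}\alpha(1+\sigma\rho_1\rho_2)$ pure imaginary with $\bar p_i=p_i^*$ and $\bar\xi_{i0}=\xi_{i0}^*$, and working at the base indices $n=-1,k=0,l=0$, one verifies the conjugation identities $\tau_{-1,0,0}=\tau_{0,0,0}^*$, $\tau_{-1,1,0}=\tau_{0,-1,0}^*$ and $\tau_{-1,0,1}=\tau_{0,0,-1}^*$. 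Setting $f=\tau_{0,0,0}$, $f^*=\tau_{-1,0,0}$, $g=\tau_{-1,1,0}$, $h=\tau_{-1,0,1}$ then reproduces precisely (\ref{MTdkBL1})--(\ref{MTdkBL4}), which are the bilinear equations equivalent to the MT model by the NVBC proposition; evaluating (\ref{KP-tau}) at these indices yields the explicit determinants (\ref{multi-dark-sol-theorem-01})--(\ref{multi-dark-sol-theorem-03}).

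Finally I would confirm that the constraint (\ref{constraint-condition}) stated in the theorem is nothing other than the dimension-reduction condition (\ref{condition-1}) rewritten under $a={\rm i}\alpha$, $b={\rm i}\alpha(1+\sigma\rho_1\rho_2)$: a direct computation gives $b(a-b)=\alpha^2\rho_1\rho_2(\sigma+\rho_1\rho_2)$, so (\ref{condition-1}) becomes exactly $[p_i-{\rm i}\alpha(1+\sigma\rho_1\rho_2)][p_i^*+{\rm i}\alpha(1+\sigma\rho_1\rho_2)]=\alpha^2\rho_1\rho_2(\sigma+\rho_1\rho_2)$. I expect the main obstacle to be the bookkeeping of the dimension reduction: one must check that the single condition (\ref{condition-1}) genuinely delivers all three index-shift and flow-proportionality relations simultaneously, rather than several independent requirements, and that it remains compatible with the reality of $\alpha$ demanded by the subsequent conjugate reduction. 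Ensuring that these reductions are mutually consistent, so that no over-determination of the $p_i$ occurs, is the delicate point on which the whole construction rests.
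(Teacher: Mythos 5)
Your proposal is correct and follows essentially the same route as the paper: the dimension reduction via the single constraint $(p_i-b)(\bar p_i+b)=b(a-b)$ (yielding the two flow-proportionality relations and the index-shift identities simultaneously), the scaling of $x_{\pm 1}$ with $b=a(1+\sigma\rho_1\rho_2)$, the complex-conjugate reduction at base indices $n=-1$, $k=l=0$ with $a={\rm i}\alpha$, and the identification $f=\tau_{0,0,0}$, $f^*=\tau_{-1,0,0}$, $g=\tau_{-1,1,0}$, $h=\tau_{-1,0,1}$ recovering the NVBC bilinear system. Your closing computation $b(a-b)=\alpha^2\rho_1\rho_2(\sigma+\rho_1\rho_2)$, identifying the theorem's constraint with the dimension-reduction condition, is exactly how the paper closes the argument.
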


\section{Dynamics of bright and dark soliton solutions}
\subsection{One- and two-bright solitons and their dynamics}
Based on the $N$-bright soliton solution to the MT model, one- and
two-soliton solutions are calculated as follows.
For simplicity, we reparameterize $\xi'_i=\xi_i+\tilde{\xi}'_i$ with $\alpha_i=e^{\tilde{\xi}'_i}$,
and redefine the complex constants $p_i$ and $\xi'_i$ as
\begin{equation}\label{reparemeterrize01}
p_i=a_i+{\rm i}b_i,\ \ \xi'_i=\eta_{i0}+{\rm i}\theta_{i0},\ \ i=1,2,
\end{equation}
where $a_i$, $b_i$, $\eta_{i0}$ and $\theta_{i0}$ are real constants.
Then the variables $\xi'_1$ and $\xi'_2$ are rewritten as
\begin{equation}\label{reparemeterrize02}
\xi'_i=\eta_i+{\rm i}\theta_i,\ \
\eta_i=a_i(x- \frac{1}{a^2_i+b^2_i}t)+\eta_{i0},\ \
\theta_i=b_i(x+\frac{1}{a^2_i+b^2_i}t)+\theta_{i0},\ \ i=1,2.
\end{equation}

By taking $N=1$
in (\ref{multi-bright-sol-theorem-01})-(\ref{multi-bright-sol-theorem-02}),
we obtain the tau functions for one-soliton solution
\begin{eqnarray}
f =1+\frac{\mathrm{i}\sigma p_1 |\alpha _1|^{2}} {(p_1+p^*_1)^2 }e^{\xi _{1}+\xi^* _{1}} ,
\ \
g =\frac{\mathrm{i} \alpha_1^*} {p_1} e^{\xi _{1}},
\ \
h=\alpha_1^* e^{\xi _{1}},
\end{eqnarray}
or
\begin{eqnarray}\label{mt-bright-one-soliton1}
f =1+\frac{\mathrm{i} \sigma (a_1+{\rm i} b_1) } {4a^2_1 }e^{2\eta_1} ,
\ \
g =\frac{\mathrm{i}} {a_1+{\rm i} b_1} e^{\eta _{1}+{\rm i}\theta_1},
\ \
h=e^{\eta _{1}+{\rm i}\theta_1}.
\end{eqnarray}
This leads to the square of the modulus of one-soliton solution for the MT model (\ref{MTa}) and (\ref{MTb})
\begin{eqnarray}
&& |u|^2=\frac{2a^2_1}{(a^2_1+b^2_1)^{\frac{3}{2}}}\frac{1}{{\rm cosh}(2\eta_1+2\delta)-\frac{\sigma b_1}{\sqrt{a^2_1+b^2_1}}},\\
&& |v|^2=\frac{2a^2_1}{(a^2_1+b^2_1)^{\frac{1}{2}}}\frac{1}{{\rm cosh}(2\eta_1+2\delta)-\frac{\sigma b_1}{\sqrt{a^2_1+b^2_1}}},
\end{eqnarray}
with $e^{2\delta}=\frac{\sqrt{a^2_1+b^2_1}}{4a^2_1}$.
Then the amplitudes $A_u$ and $A_v$ are given by
\begin{eqnarray}
&& A_u=\sqrt{\frac{2}{a^2_1+b^2_1}(\sqrt{a^2_1+b^2_1}+\sigma b_1)}=\sqrt{2(v^{\frac{1}{2}}_1+\sigma b_1v_1)},\ \
\\
&&
A_v=\sqrt{2(\sqrt{a^2_1+b^2_1}+\sigma b_1)}=\sqrt{2(v^{-\frac{1}{2}}_1+\sigma b_1)},
\end{eqnarray}
which means the amplitude-velocity relations with the velocity $\kappa_1=\frac{1}{a^2_1+b^2_1}$.
For a fixed value of the imaginary part of wave number $b_1$, it can be seen that when $\sigma=1$, $A_u$ $(A_v)$ is an increasing (decreasing) function of $\kappa_1$ , while when $\sigma=-1$, $A_u$ increases in the interval $0<\kappa_1\leqslant \frac{1}{4b^2_1}$ and decreases in the interval $\frac{1}{4b^2_1}\leqslant \kappa_1 <\frac{1}{b^2_1}$, $A_v$ is a decreasing function in the interval $0 < \kappa_1 <\frac{1}{b^2_1}$.
Two cases of the amplitude-velocity relations are depicted in Fig. \ref{fig1}.

\begin{figure}[!htbp]
\centering
{\includegraphics[height=1.6in,width=4.0in]{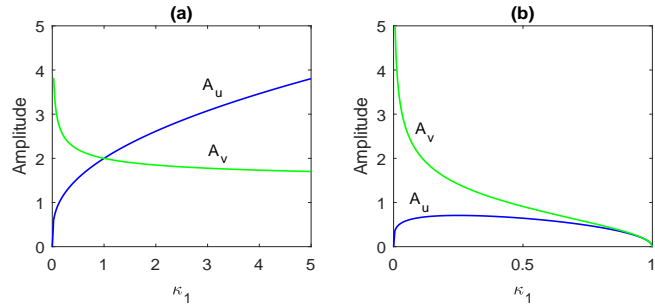}}
\caption{The amplitude-velocity relations with $b_1=1$: (a) $\sigma=1$ and (b) $\sigma=-1$.
\label{fig1}}
\end{figure}

Besides, we need to point out that the amplitudes $A_u$ and $A_v$ remain finite in the limit of $a_1\rightarrow 0$.
In this limit case, the bright one-soliton solution (\ref{mt-bright-one-soliton1}) degenerates to the following algebraic soliton
solution
\begin{equation}\label{algebraic-soliton1}
u=\pm\frac{2\sqrt{\frac{\sigma}{b_1}}e^{{\rm i}\left[b_1(x+\frac{1}{b^2_1}t)+\theta'_{10}\right]}}{b_1[2(x-\frac{1}{b^2_1}t+x_0)+\frac{{\rm i}}{b_1}]},
\ \
v=\pm\frac{2\sqrt{\frac{\sigma}{b_1}}e^{{\rm i}\left[b_1(x+\frac{1}{b^2_1}t)+\theta'_{10}\right]}}{2(x-\frac{1}{b^2_1}t+x_0)+\frac{{\rm i}}{b_1}},
\end{equation}
where we take $e^{\eta_{10}+{\rm i}\theta_{10}}=\mp 2a_1\sqrt{\frac{\sigma}{b_1}}e^{a_1x_0+{\rm i}\theta'_{10}}$.
The profiles of one-soliton are plotted in Fig. \ref{fig2} with different values of wave number $p_1$.

\begin{figure}[!htbp]
\centering
{\includegraphics[height=1.6in,width=4.0in]{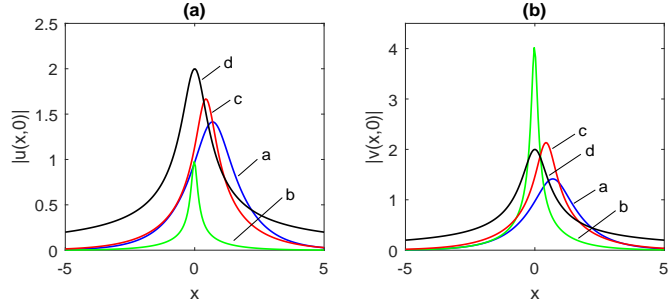}}
\caption{The bright one-soliton  with the parameters $\sigma=1$ and $\xi_{10}=x_0=0$: (a) $p_1=1$, (b) $p_1=1+4{\rm i}$, (c) $p_1=0.8+{\rm i}$ and (d) the algebraic soliton with $p_1={\rm i}$.
\label{fig2}}
\end{figure}

The tau functions corresponding to two-bright soliton solution can be obtained  by taking $N=2$ in (\ref{multi-bright-sol-theorem-01})-(\ref{multi-bright-sol-theorem-02})
\begin{eqnarray}
\nonumber &&
f=1+c_{1 1^*}e^{\xi _{1}+{\xi}^*_{1}}+c_{21^*}e^{\xi _{2}+{\xi}^*_{1}} +c_{12^*}e^{\xi _{1}+{\xi}^*_{2}}
\\
&&\hspace{1cm}
+c_{22^*}e^{\xi _{2}+{\xi}^*_{2}} +c_{121^*2^*}e^{\xi _{1}+\xi _{2}+{\xi}^*_{1}+{\xi}^*_{2}}\,,
\\
&&
g =  \frac{\mathrm{i}\alpha^{\ast}_{1}}{p_1}e^{\xi _{1}} + \frac{\mathrm{i}\alpha^{\ast}_{2}}{p_2}e^{\xi_{2}}-\frac{\mathrm{i}p^*_1}{p_1p_2} c_{121^*}e^{\xi_{1}+\xi_{2}+{\xi}^*_{1}}-\frac{\mathrm{i}p^*_2}{p_1p_2} c_{122^*}e^{\xi _{1}+\xi_{2}+{\xi}^*_{2}}\,,
\\
&&
h=\alpha^*_{1}e^{\xi _{1}}+\alpha^*_{2}e^{\xi _{2}}+c_{121^*}e^{\xi_{1}+\xi_{2}+{\xi}^*_{1}}+c_{122^*}e^{\xi _{1}+\xi_{2}+{\xi}^*_{2}}\,,
\end{eqnarray}%
where%
\begin{eqnarray*}
&&
c_{ij^*}=\frac{{\rm i}\sigma{p}_{i} \alpha^*_i\alpha_j}{({p}_{i}+{p}_{j}^{\ast
})^{2}},\quad
c_{12i^*}=\left( p_1-p_{2}\right) p^*_i \left[ \frac{\alpha^*_{2}c_{1i^*}}{p_1(p_{2}+{p}_{i}^{\ast })}-\frac{\alpha^*_{1}c_{2i^*}}{p_2(p_{1}+{p}_{i}^{\ast })}\right] \,,
\\
&&
c_{121^*2^*}=|p_1-p_2|^2\left[ \frac{c_{11^*}c_{22^*}}{%
(p_{1}+{p}_{2}^{\ast })\left( p_{2}+{p}_{1}^{\ast }\right) }-\frac{c_{12^*}c_{21^*}}{(p_{1}+{p}_{1}^{\ast })\left( p_{2}+{p}_{2}^{\ast }\right) }%
\right] \,.
\end{eqnarray*}
The interaction of two-bright solitons are illustrated in Fig.\ref{fig3} for different parameters.
As shown in Fig. \ref{fig3}(a) and (b), two-bright solitons undergo the regular collisions.
The bound soliton state is exhibited in Fig. \ref{fig3}(c), in which two solitons move with the same velocity $\frac{1}{a^2_1+b^2_1}=\frac{1}{a^2_2+b^2_2}$.

\begin{figure}[!htbp]
\centering
{\includegraphics[height=2.6in,width=5.0in]{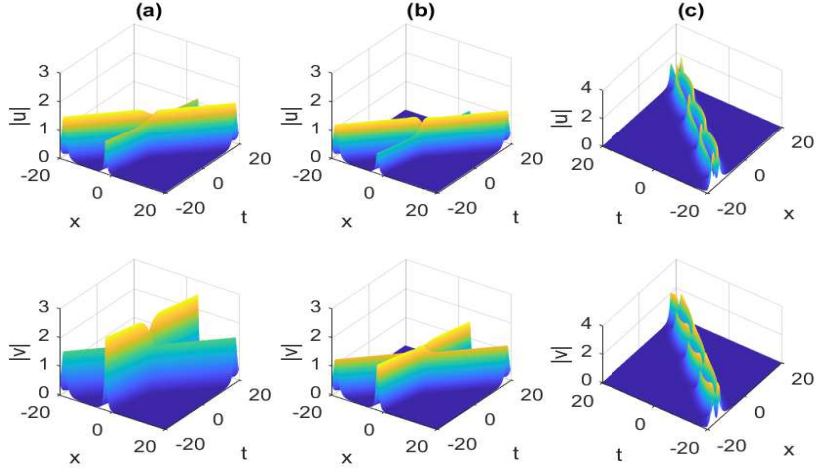}}
\caption{The bright two-soliton  with the parameters $\alpha_1=\alpha_2=1$, $\xi_{10}=\xi_{20}=0$, $p_1=1+\frac{1}{5}{\rm i}$ and $p_2=2+{\rm i}$: (a) $\sigma=1$ and (b) $\sigma_1=-1$; (c) the bound state with $\sigma=1$, $p_1=1+\frac{2}{5}{\rm i}$ and $p_2=\frac{4}{5}+\frac{\sqrt{13}}{5}{\rm i}$.
\label{fig3}}
\end{figure}

\subsection{One- and two-dark solitons and their dynamics}
In this subsection, we list one- and two-dark soliton solutions to the MT model.
Same as the bright soliton case, we rewrite the complex constants $p_i$ and $\xi_i$ as
\begin{equation}\label{dark-reparemeterrize01}
p_i=a_i+{\rm i}b_i,\ \ \xi_i=\eta_{i0}+{\rm i}\theta_{i0},\ \ i=1,2,
\end{equation}
where $a_i$, $b_i$, $\eta_{i0}$ and $\theta_{i0}$ are real constants.
Then the variables $\xi_1$ and $\xi_2$ are rewritten as
\begin{eqnarray}\label{dark-reparemeterrize02}
\nonumber &&
\xi_i=\eta_i+{\rm i}\theta_i,\ \ \kappa_i= \frac{\alpha^2\rho^2_1(1+\sigma\rho_1\rho_2)}{\rho^2_2(a^2_i+b^2_i)},
\\
&&
\eta_i=\frac{\rho_2}{\alpha\rho_1}a_i (x- \kappa_i t)+\eta_{i0},\ \
\theta_i=\frac{\rho_2}{\alpha\rho_1}b_i (x+\kappa_i t)+\theta_{i0},\ \
i=1,2.
\end{eqnarray}

By taking $N=1$, we have tau functions for one-dark soliton solution
\begin{eqnarray}
&& f=1+\frac{\mathrm{i} p_1 } {(p_1+p^*_1)}e^{\xi _{1}+\xi^* _{1}}\,,
\ \
g=1+\frac{\mathrm{i} p^*_1 } {(p_1+p^*_1)} \left( \frac{p_1-\mathrm{i} \alpha}{p^*_1+\mathrm{i} \alpha} \right) e^{\xi _{1}+\xi^* _{1}}\,,
\\
&&
h=1+\frac{\mathrm{i} p^*_1} {(p_1+p^*_1)} \left[ \frac{p_1- {\rm i}\alpha(1+\sigma\rho_1\rho_2)}{p^*_1+{\rm i}\alpha(1+\sigma\rho_1\rho_2)} \right] e^{\xi _{1}+\xi^* _{1}}\,,
\end{eqnarray}
which leads to the squares of the modulus of $u$ and $v$:
\begin{eqnarray}
&&
|u|^2=\rho^2_1\left[1+\frac{2{\rm sgn}(a_1)\alpha a^2_1}{[a^2_1+(b_1-\alpha)^2]\sqrt{a^2_1+b^2_1}}\frac{1}{{\rm cosh}(2\eta_1+2\delta')-\frac{{\rm sgn}(a_1) b_1}{\sqrt{a^2_1+b^2_1}}} \right],\ \ \ \ \ \
\\
&&
|v|^2=\rho^2_2\left[1+\frac{2{\rm sgn}(a_1)(\alpha+\sigma\alpha\rho_1\rho_2) a^2_1}{[a^2_1+(b_1-\alpha-\sigma\alpha\rho_1\rho_2)^2]\sqrt{a^2_1+b^2_1}}\frac{1}{{\rm cosh}(2\eta_1+2\delta')-\frac{{\rm sgn}(a_1) b_1}{\sqrt{a^2_1+b^2_1}}} \right],\ \ \ \ \ \
\end{eqnarray}
with $e^{4\delta'}=\frac{a^2_1+b^2_1}{4a^2_1}$.
This implies that $u$ exhibits a dark soliton when $\alpha a_1<0$ and an anti-dark soliton on the background $\rho_1$ when $\alpha a_1>0$, while $v$ represents a dark soliton when $(\alpha+\sigma\alpha\rho_1\rho_2) a_1<0$ and an anti-dark soliton on the background $\rho_1$ when $(\alpha+\sigma\alpha\rho_1\rho_2) a_1>0$.

Besides, the dark one-soliton solution can be rewritten as
\begin{eqnarray*}
&&
u=\frac{\rho_1}{2}  e^{\mathrm{i}(1+\sigma\rho_1\rho_2) \left(\frac{\rho_2}{\rho_1}x+\frac{\rho_1}{\rho_2}t\right)}
\left[ 1+e^{2{\rm i}\phi_1}+ (e^{2{\rm i}\phi_1}-1)\tanh(\eta_1+\eta_0+{\rm i}\phi_0) \right],\ \ \
\\
&&
v=\frac{\rho_2}{2}  e^{\mathrm{i}(1+\sigma\rho_1\rho_2) \left(\frac{\rho_2}{\rho_1}x+\frac{\rho_1}{\rho_2}t\right)}
\left[ 1+e^{4{\rm i}\phi_0+2{\rm i}\phi_2 }+ (e^{4{\rm i}\phi_0+2{\rm i}\phi_2}-1)\tanh(\eta_1+\eta_0-{\rm i}\phi_0) \right],\ \ \
\end{eqnarray*}
where
\begin{eqnarray*}
e^{2\eta_0+2{\rm i}\phi_0}=\frac{-{\rm i}p^*_1}{p_1+p^*_1},\ \
e^{2{\rm i}\phi_1}=-\frac{p_1-\mathrm{i} \alpha}{p^*_1+\mathrm{i} \alpha},\ \
e^{2{\rm i}\phi_2}=-\frac{p_1- {\rm i}\alpha(1+\sigma\rho_1\rho_2)}{p^*_1+{\rm i}\alpha(1+\sigma\rho_1\rho_2)},
\end{eqnarray*}
Therefore, the phase of $u$ and $v$ acquire shifts in the amount of $2\phi_1$ and $4\phi_0+2\phi_2$
when $\eta_1$ varies from $-\infty$ to $+\infty$, and the grayness of two components are $|\rho_1\cos\phi_1|$ and $|\rho_2\cos(2\phi_0+\phi_2)|$, respectively.

Moreover, the constraint condition (\ref{constraint-condition}) becomes
\begin{equation}
a^2_1+[b_1-\alpha(1+\sigma\rho_1\rho_2)]^2=\sigma\rho_1\rho_2(1+\sigma\rho_1\rho_2)\alpha^2,
\end{equation}
which implies that the condition $\sigma\rho_1\rho_2(1+\sigma\rho_1\rho_2)>0$ needs to be hold.
If the above constraint condition is expressed in terms of the velocity $\kappa_1$, one can find that
\begin{equation}
b_1=\frac{1}{2}\left(\alpha + \frac{\rho^2_1}{\kappa_1\rho^2_2}\right),\ \
a^2_1=\frac{\alpha^2}{4\kappa^2_1}(\kappa_{1,max}-\kappa_1)(\kappa_1-\kappa_{1,min}),\ \
\kappa_{1,min}<\kappa_1<\kappa_{1,max},
\end{equation}
where $\kappa_{1,min}=\hat{k}^{-}_1$, $\kappa_{1,max}=\hat{k}^{+}_1$ when $\alpha>0$, and $\kappa_{1,min}=\hat{k}^{+}_1$, $\kappa_{1,max}=\hat{k}^{-}_1$ when $\alpha<0$ with $\hat{k}^{\pm}_1=\frac{\rho^2_1}{\alpha\rho^2_2}[1+2\sigma\rho_1\rho_2 \pm 2\sqrt{\sigma\rho_1\rho_2(1+\sigma\rho_1\rho_2)}]$.
In the following, we discuss the maximum amplitude-velocity relations.
Without loss of generality, we consider $\alpha>0$, $1+\sigma\rho_1\rho_2>0$, $\rho_1,\rho_2>0$ which lead to $\kappa_1>0$ and $\sigma=1$,
then there are two cases corresponding to the sign of $a_1$.

\textbf{Case 1}: $a_1>0$. Two components exhibit as anti-dark soliton and the amplitude-velocity relations are given by
\begin{eqnarray}
&&
A_u=\sqrt{\rho^2_1+\frac{2\alpha\rho^2_1(\sqrt{a^2_1+b^2_1}+b_1)}{a^2_1+(b_1-\alpha)^2}  }-\rho_1
=\sqrt{\rho^2_1+ \Delta^{+}_1 } -\rho_1,
\\
&&
A_v=\sqrt{\rho^2_2+\frac{2\alpha(1+\rho_1\rho_2)\rho^2_2(\sqrt{a^2_1+b^2_1}+b_1)}{a^2_1+(b_1-\alpha-\alpha\rho_1\rho_2)^2}  }-\rho_2
=\sqrt{\rho^2_2+ \frac{\Delta^{+}_1}{\alpha\kappa_1} } -\rho_2,\ \ \ \ \ \
\end{eqnarray}
with $\Delta^{+}_1=\frac{\rho^2_1+\alpha\kappa_1\rho^2_2+2\sqrt{\alpha\kappa_1(1+\rho_1\rho_2)\rho^2_1\rho^2_2} }{\rho_1\rho_2}$.
$A_u$ is an increasing function whereas $A_v$ is a decreasing function in the interval $\hat{k}^{-}_1<\kappa_1<\hat{k}^{+}_1$,
which is shown in Fig.\ref{fig4}(a).

\textbf{Case 2}: $a_1<0$. Two components behave as dark soliton and the amplitude-velocity relations read
\begin{eqnarray}
&&
A_u=\rho_1-\sqrt{\rho^2_1-\frac{2\alpha\rho^2_1(\sqrt{a^2_1+b^2_1}-b_1)}{a^2_1+(b_1-\alpha)^2}  }
=\rho_1-\sqrt{\rho^2_1+ \Delta^{-}_1 },
\\
&&
A_v=\rho_2-\sqrt{\rho^2_2-\frac{2\alpha(1+\rho_1\rho_2)\rho^2_2(\sqrt{a^2_1+b^2_1}-b_1)}{a^2_1+(b_1-\alpha-\alpha\rho_1\rho_2)^2}  }
=\rho_2-\sqrt{\rho^2_2+ \frac{\Delta^{-}_1}{\alpha\kappa_1} },\ \ \ \ \ \ \
\end{eqnarray}
with $\Delta^{-}_1=\frac{\rho^2_1+\alpha\kappa_1\rho^2_2-2\sqrt{\alpha\kappa_1(1+\rho_1\rho_2)\rho^2_1\rho^2_2} }{\rho_1\rho_2}$.
$A_u$ increases in the interval $\hat{k}^{-}_1<\kappa_1<\hat{k}^{u}_0$ with $\hat{k}^{u}_0=\frac{(1+\rho_1\rho_2)\rho^2_1}{\alpha\rho^2_2}$ and decreases in the interval $\hat{k}^{u}_0<\kappa_1<\hat{k}^{+}_1$, while $A_v$ increases in the interval $\hat{k}^{-}_1<\kappa_1<\hat{k}^{v}_0$ with $\hat{k}^{v}_0=\frac{\rho^2_1}{\alpha\rho^2_2(1+\rho_1\rho_2)}$ and decreases in the interval $\hat{k}^{v}_0<\kappa_1<\hat{k}^{+}_1$, which is displayed in Fig.\ref{fig4}(b).
It is noted that the black soliton occurs at two critical points $\hat{k}^{u}_0$ and $\hat{k}^{v}_0$, which correspond to $p_1=\frac{-\alpha\sqrt{(4+3\rho_1\rho_2)\rho_1\rho_2}+{\rm i}(2+\rho_1\rho_2)\alpha}{2(1+\rho_1\rho_2)}$ and $p_1=\frac{\alpha}{2}[-\sqrt{(4+3\rho_1\rho_2)\rho_1\rho_2} + {\rm i}(2+\rho_1\rho_2) ]$ respectively.

\begin{figure}[!htbp]
\centering
{\includegraphics[height=1.6in,width=4.0in]{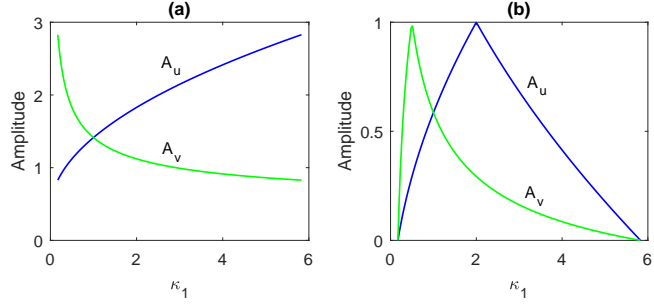}}
\caption{The amplitude-velocity relations with $\alpha=\rho_1=\rho_2=1$: (a) anti-dark soliton ($a_1>0$) and (b) dark soliton ($a_1<0$).
\label{fig4}}
\end{figure}

By taking the limit $a_1\rightarrow 0$, the tau functions have the following expansion
formulae
\begin{eqnarray}
&&f=1-{\rm sgn}(a_1){\rm sgn}(b_1)\left[ 1+2a_1\left( \hat{\eta}_1 - \frac{{\rm i}}{2b_1} \right) \right] + \emph{O}(a^2_1),\ \ \ \ \ \ \ \ \
\\
&&g=1-{\rm sgn}(a_1){\rm sgn}(b_1)\left[ 1+2a_1\left(\hat{\eta}_1 + \frac{{\rm i}}{2b_1 }\frac{\alpha+b_1}{\alpha-b_1} \right) \right] + \emph{O}(a^2_1),\ \ \ \ \ \ \ \ \
\\
&&h=1-{\rm sgn}(a_1){\rm sgn}(b_1)\left[ 1+2a_1\left( \hat{\eta}_1 + \frac{{\rm i}}{2b_1 }\frac{\alpha(1+\sigma\rho_1\rho_2)+b_1}{\alpha(1+\sigma\rho_1\rho_2)-b_1} \right) \right] + \emph{O}(a^2_1),\ \ \ \ \ \ \ \ \
\end{eqnarray}
with $\eta_{10}=x_0-\delta'$ and $\hat{\eta}_1=\frac{\rho_2}{\alpha\rho_1}x-\frac{(1+\sigma\rho_1\rho_2)\alpha\rho_1}{\rho_2b^2_1}t +x_0$.
This suggests that only ${\rm sgn}(a_1){\rm sgn}(b_1)=1$ gives rise to the algebraic soliton solution:
\begin{eqnarray}
&&
u=\rho_1 e^{\mathrm{i}(1+\sigma\rho_1\rho_2) \left(\frac{\rho_2}{\rho_1}x+\frac{\rho_1}{\rho_2}t\right)}
\frac{\hat{\eta}_1 + \frac{{\rm i}}{2b_1 }\frac{\alpha+b_1}{\alpha-b_1} }{\hat{\eta}_1 + \frac{{\rm i}}{2b_1} },
\\
&&
v=\rho_2 e^{\mathrm{i}(1+\sigma\rho_1\rho_2) \left(\frac{\rho_2}{\rho_1}x+\frac{\rho_1}{\rho_2}t\right)}
\frac{\hat{\eta}_1 + \frac{{\rm i}}{2b_1 }\frac{\alpha(1+\sigma\rho_1\rho_2)+b_1}{\alpha(1+\sigma\rho_1\rho_2)-b_1} }{\hat{\eta}_1 - \frac{{\rm i}}{2b_1}},
\end{eqnarray}
with $b_1=\alpha [1+\sigma\rho_1\rho_2+\sqrt{\sigma\rho_1\rho_2(1+\sigma\rho_1\rho_2)}]$.
The usual dark soliton, black soliton and algebraic soliton are illustrated in \ref{fig5} under the different parameters' values.

\begin{figure}[!htbp]
\centering
{\includegraphics[height=1.6in,width=4.0in]{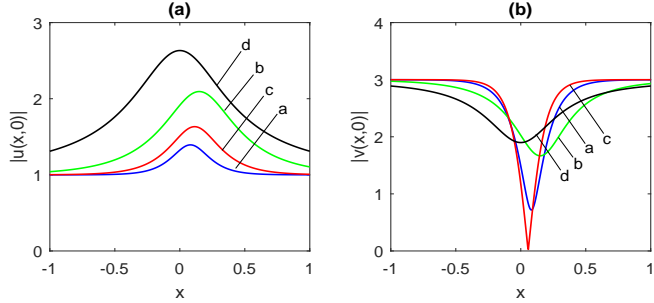}}
\caption{The dark one-soliton  with the parameters $\rho_1=\alpha=-\sigma=\rho_2/3=1$ and $\xi_{10}=x_0=0$: (a) $p_1=\sqrt{2}$, (b) $p_1=\frac{1+(\sqrt{23}-4){\rm i}}{2}$, (c) $p_1=\frac{\sqrt{5}+{\rm i}}{4}$ for $u$ and $p_1=\frac{\sqrt{15}-{\rm i}}{2}$for $v$ ($v$ is a black soliton)  and (d) the algebraic soliton with $p_1= (\sqrt{6}-2){\rm i}$.
\label{fig5}}
\end{figure}

For the dark two-soliton with $N=2$ in (\ref{multi-dark-sol-theorem-01})-(\ref{multi-dark-sol-theorem-03}), we obtain the tau functions
\begin{eqnarray}
&&
f= 1+ d_{11^*} e^{\xi_1+\xi^*_1}+ d_{22^*} e^{\xi_2+\xi^*_2}+d_{11^*}d_{22^*}\Omega_{12} e^{\xi_1+\xi_2+\xi^*_1+\xi^*_2},
\\
&&
g= 1+ d^*_{11^*} K_1 e^{\xi_1+\xi^*_1}+ d^*_{22^*} K_2 e^{\xi_2+\xi^*_2}+d^*_{11^*}d^*_{22^*}K_1K_2 \Omega_{12} e^{\xi_1+\xi_2+\xi^*_1+\xi^*_2},\ \ \ \
\\
&&
h= 1+ d^*_{11^*} H_1 e^{\xi_1+\xi^*_1}+ d^*_{22^*} H_2 e^{\xi_2+\xi^*_2}+d^*_{11^*}d^*_{22^*}H_1H_2 \Omega_{12} e^{\xi_1+\xi_2+\xi^*_1+\xi^*_2},\ \ \ \
\end{eqnarray}
with
\begin{equation*}
d_{ii^*}=\frac{{\rm i}p_i}{p_i+p^*_i},\ \
K_i=-\frac{p_i-{\rm i}\alpha}{p^*_i+{\rm i}\alpha},\ \
H_i=-\frac{p_i-{\rm i}\alpha(1+\sigma\rho_1\rho_2)}{p^*_i+{\rm i}\alpha(1+\sigma\rho_1\rho_2)},\ \
\Omega_{12}=\frac{|p_1-p_2|^2}{|p_1+p^*_2|^2}.
\end{equation*}

Since the dark one-soliton solution exhibits dark and anti-dark soliton, the dark two-soliton solution may possess
three types of the interaction: dark-dark solitons, dark-anti-dark solitons,
and anti-dark-anti-dark solitons, which are displayed in Fig.\ref{fig6}.
From the constraint conditions (\ref{constraint-condition}), if we assume the same velocity $\kappa_i$,
one cannot get different value of $p_i$.
Hence there does not exist bound state for the dark-soliton in the MT model.

\begin{figure}[!htbp]
\centering
{\includegraphics[height=2.6in,width=5.0in]{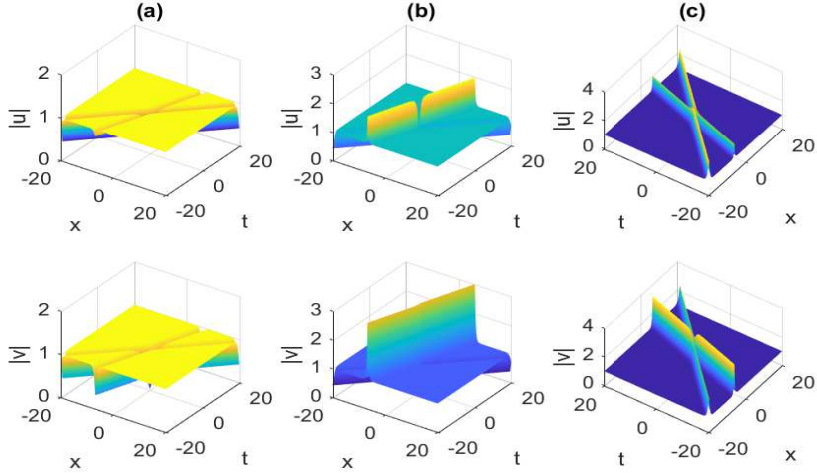}}
\caption{The dark two-soliton  with the parameters $\rho_1=\rho_2=\alpha=\sigma=1$ and $\xi_{10}=\xi_{20}=0$: (a) dark-dark soliton $p_1=1+{\rm i}$, $p_2=\sqrt{2}+2{\rm i}$; (b) dark-anti-dark soliton $p_1=1+{\rm i}$, $p_2=-\sqrt{2}+2{\rm i}$; (c) anti-dark-anti-dark soliton $p_1=-1+{\rm i}$, $p_2=-\sqrt{2}+2{\rm i}$.
\label{fig6}}
\end{figure}

 \section{Concluding Remarks}
In this paper, we give bilinear formulation of the massive Thirring model which is  missing in the literature. Based on the bilinear equations, we constructed both the bright and dark soliton solutions to the massive Thirring model under VBC and NVBC, respectively, via the KP hierarchy reduction technique. Especially, we have shown that the discrete KP equation can generate a set of four bilinear equations, which are reduced to bilinear equations corresponding to the dark soliton of the MT model. We have also listed one- and two- bright and dark soliton solutions and analyzed their propertied in details.

We should point out that, even if we give the bright and dark soliton solution to the MT model in the light cone coordinates, one can easily obtain the corresponding soliton solutions  in the laboratory coordinates. Moreover, based on the bilinear formulation established in the present work,  we can also construct general breather and rogue wave solutions of the MT model (\ref{MTa})--(\ref{MTb})  via the KP hierarchy reduction method. We will report our results elsewhere in the near future. 

Finally, we comment that, since we have established a link between the discrete KP equation and the MT model in terms of tau functions and bilinear equations, similar to our recent work on modified Camassa-Holm (mCH) equation \cite{ZYFmCH-discrete}, it paves a way for constructing integrable semi-discrete and fully discrete MT model, which definitely deserves immediate investigation.

\section*{Acknowledgement}
We thank Prof. Dmitry Pelinovsky for drawing us attention of the massive Thirring model and useful comments on our first draft.
JC's work was supported from the National Natural
Science Foundation of China (NSFC) (No. 11705077).
BF's work is
partially supported by National Science Foundation (NSF) under Grant No. DMS-1715991 and U.S. Department of Defense (DoD), Air Force for Scientific
Research (AFOSR) under grant No. W911NF2010276.

\end{document}